\documentclass[conference]{IEEEtran}
\IEEEoverridecommandlockouts

\usepackage{cite}
\usepackage{amsmath,amssymb,amsfonts}
\usepackage[ruled,linesnumbered]{algorithm2e} 
\usepackage{subcaption} 
\usepackage{graphicx}
\usepackage{textcomp}
\usepackage{xcolor}

\usepackage{amsthm}
\usepackage{url}

\usepackage{enumitem}

\newtheorem{theorem}{Theorem}
\newtheorem{lemma}{Lemma}
\newtheorem{definition}{Definition}
\newtheorem{example}{Example}
\newtheorem{remark}{Remark}

\newtheorem{prule}[remark]{Rule}

\begin{document}

\title{Lantern: Finding Committable Transactions via Back-Propagation on DAGs
\thanks{This work was funded by the Beijing Advanced Innovation Center for Future Blockchain and Privacy Computing.}
}

\author{
	\IEEEauthorblockN{
		Denglong Li\IEEEauthorrefmark{1}, 
		Gerui Wang\IEEEauthorrefmark{2}, 
		Tian Guan\IEEEauthorrefmark{1}, 
		Mingchao Wan\IEEEauthorrefmark{2}
	}
	\IEEEauthorblockA{
		\IEEEauthorrefmark{1}\textit{Tsinghua University}, 
		\IEEEauthorrefmark{2}\textit{Beijing Academy of Blockchain and Edge Computing}
	}
	
	\IEEEauthorblockA{
		li-dl24@mails.tsinghua.edu.cn, wanggerui@baec.org.cn, guantian@sz.tsinghua.edu.cn, 
		chainmaker@baec.org.cn
	}
}
\maketitle

\begin{abstract}
Existing concurrency control protocols either introduce nondeterminism, resulting in a serial execution-replay dependency between primary and replica nodes, or rely on impractical prior knowledge of transaction read-write sets. 
In this paper, we present Lantern, a deterministic concurrency control protocol tailored for high-performance transaction processing systems operating without prior knowledge. 
The key insight of Lantern is that all zero-out-degree transaction vertices in the local dependency graph can be safely committed in ascending order using an overwrite-permissive strategy. 
We further introduce a novel Back-Propagation mechanism that iteratively propagates dependency states from sink to source vertices to identify additional committable transactions. 
We also propose Conflict-Free Batch Selection (CFBS) for read-modify-write intensive scenarios. 
We integrate Lantern into the open-source blockchain platform ChainMaker.
Extensive evaluations on YCSB and SmallBank benchmarks demonstrate that Lantern achieves up to a 4.2x throughput speedup over Aria and improves the throughput of ChainMaker's execution layer by at least 2.2x.
\end{abstract}

\begin{IEEEkeywords}
deterministic concurrency control, parallel algorithm.
\end{IEEEkeywords}

\section{Introduction}
\label{Introduction}


As blockchain adoption continues to expand, improving transaction throughput has become a critical challenge. Recent advances in Byzantine Fault Tolerant (BFT) consensus protocols \cite{tendermint,HotStuff,SBFT} have substantially increased consensus-layer performance.
As a result, transaction execution is emerging as a major performance bottleneck.
Traditional blockchain systems such as Bitcoin \cite{bitcoin} and Ethereum \cite{ethereum} execute transactions sequentially, which severely limits the utilization of modern multi-core processors.
Consequently, parallel transaction execution has become a natural approach to enhancing throughput.

However, many concurrency control protocols are nondeterministic \cite{partdag, Dickerson,chainmaker}. Although they can effectively exploit intra-node parallelism, participating nodes cannot execute transactions simultaneously and independently, as the execution outcomes may diverge across different nodes. To eliminate such inconsistencies, nondeterministic protocols typically operate under a \emph{two-stage execution architecture} \cite{PDP, ParBlockchain, Eve, OptSmart} to ensure that all participating nodes produce identical serializable results.
As illustrated in Fig.~\ref{towstageexecution}, the primary node first utilizes a nondeterministic protocol to execute transactions in parallel. Once execution completes, the primary node generates \emph{scheduling metadata} that captures the exact dependencies between transactions and disseminates it to replica nodes during one of the communication rounds of the BFT protocol. Subsequently, replica nodes use this metadata to guide their replay, thereby producing the exact same serializable outcome as the primary node.

Obviously, this workflow introduces a strict serial dependency between the primary node and replica nodes, which limits inter-node parallelism and reduces overall execution-layer throughput. Specifically, we identify three key properties when adopting nondeterministic protocol under the two-stage execution architecture: 

(i) \emph{Cumulative Latency}, where the overall execution latency is cumulative, equaling the sum of the execution stage and the replay stage; 
(ii) \emph{Efficient Replay}, since transaction conflicts are already resolved during the execution stage, the replay stage is completely decoupled from conflict resolution and only needs to replay transactions based on the metadata, yielding a significantly higher throughput; and 
(iii) \emph{Metadata Dependency}, meaning the replay stage is strictly contingent upon the scheduling metadata generated during the execution stage and cannot function independently.

ChainMaker \cite{chainmaker} adopts a nondeterministic protocol based on optimistic concurrency control (OCC) \cite{OCC}, relying on the aforementioned two-stage execution architecture to guarantee state consistency across nodes.
As China's most prominent consortium blockchain platform—securing the top domestic market share for three consecutive years since its launch in 2021 \cite{BeijingGov2025Report}—ChainMaker serves as a representative baseline for our study.

In contrast, deterministic concurrency control protocols \cite{Calvin,BOHM,aria,blockstm,LiTM} allow all participating nodes to execute transactions simultaneously in a single execution stage while deterministically reaching identical serializable results. Such deterministic protocols are well-suited for the \emph{Order-Execute (OE) architecture} \cite{FISCO-BCOS, HighthroughputByzantinefaulttolerance, QuorumGighub}. As illustrated in Fig.~\ref{OE}, in the order stage, all nodes leverage a BFT protocol to establish a globally ordered transaction sequence. In the subsequent execution stage, all nodes take this identical transaction sequence as input and employ a deterministic protocol to output an identical serializable outcome. Compared with the two-stage execution architecture, the OE architecture eliminates the strict serial dependency between the primary node and replica nodes, thereby completely avoiding the issue of \emph{Cumulative Latency}.

Furthermore, deterministic protocols can be categorized into static-analysis-based protocols \cite{Calvin,Caracal,BOHM,PWV} and runtime protocols \cite{aria,blockstm,FISCO-BCOS}. The former rely on a priori knowledge of transaction read-write sets to determine a schedule in advance. However, this assumption is often impractical in blockchain environments. Since smart contracts are typically quasi-Turing-complete \cite{EthereumYellowPaper}, their read and write sets generally cannot be accurately determined prior to execution. Therefore, runtime protocols that do not require prior knowledge are imperative for blockchain systems.

\begin{figure}[t]
	\centering
	\includegraphics[width=0.9\linewidth]{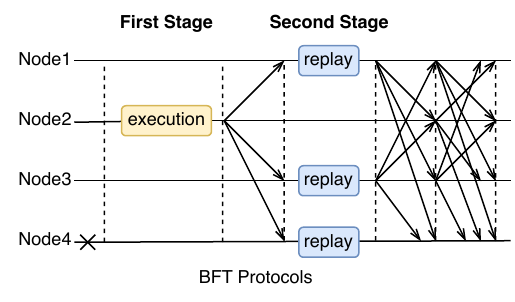}
	\caption{Two-Stage Execution Architecture}
	\label{towstageexecution}
\end{figure}

In this paper, we propose Lantern, a deterministic concurrency control protocol tailored for high-performance transaction processing systems that eliminates the need for prior knowledge of transaction read-write sets. 
By employing a batch-based graph construction scheme, Lantern strictly confines transaction dependencies within individual batches. 
Within each batch, all zero-out-degree vertices in the dependency graph are identified as committable transactions, which are committed in ascending order by leveraging the overwrite behavior inherent in serial execution. 
At its core, Lantern further introduces a deterministic Back-Propagation mechanism that iteratively propagates dependency states from sink to source vertices to improve the transaction commit rate, for which we provide a rigorous theoretical proof of liveness. 
In addition, Lantern integrates a Conflict-Free Batch Selection (CFBS) mechanism to mitigate star-like dependency graphs under hot-account read-modify-write scenarios.

Beyond blockchain systems, Lantern is highly extensible to general database systems. In deterministic databases \cite{AnEvaluationoftheAdvantagesandDisadvantagesofDeterministicDatabaseSystems}, a \emph{sequencing layer} typically establishes a total order over incoming transactions, utilizing a replicated log backed by crash fault-tolerant (CFT) protocols to guarantee fault tolerance. Once this globally agreed-upon transaction sequence is established, each node can independently and concurrently leverage Lantern to achieve highly efficient execution, producing an identical, serializable outcome.
More broadly, Lantern can be seamlessly integrated into any system that provides a globally ordered transaction stream.

\begin{figure}[t]
	\centering
	\includegraphics[width=0.9\linewidth]{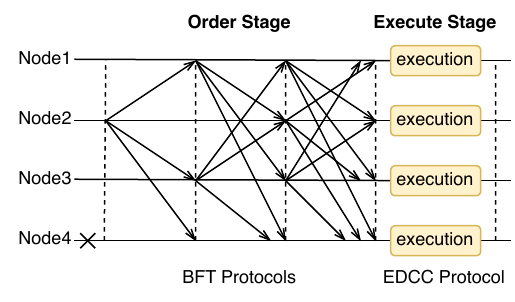}
	\caption{Order-Execute (OE) architecture}
	\label{OE}
\end{figure}

As a state-of-the-art protocol, Aria \cite{aria} was originally proposed for deterministic OLTP databases. 
Because Aria is deterministic and requires no prior read-write knowledge, it can also be seamlessly applied to blockchain systems.
While both Lantern and Aria fall into the same category of deterministic protocols that operate without prior knowledge, Lantern achieves significantly higher throughput.
Furthermore, Lantern establishes its serializability \cite{ConcurrencyControlinDistributedDatabaseSystems, Concurrencycontrolandrecoveryindatabasesystems} through a constructive proof, which enables empirical validation of both its serializability and determinism. In contrast, Aria relies on an existence proof for its serializability, making it impossible to directly validate this property through experimental evaluation.

Block-STM \cite{blockstm} also addresses parallel execution. However, Block-STM is highly VM-assisted and can only achieve its peak performance when paired with their customized Aptos VM. For instance, it requires the underlying virtual machine to suspend execution upon encountering a read conflict and resume it once the conflict is resolved. This suspension-and-resumption capability is currently supported only by the Aptos VM and is absent from most mainstream execution engines, including Wasmer, EVM, Docker-Go, and Diem VM, etc.
Consequently, Block-STM exhibits poor portability, making it exceptionally difficult to port to other blockchain platforms.
In contrast, our proposed Lantern is entirely VM-agnostic. Because Lantern performs transaction scheduling solely based on read-write sets, it remains completely oblivious to the underlying execution environments, thereby serving as a highly generic protocol.
Furthermore, Lantern uniquely enables multiple heterogeneous VM engines to coexist within a single blockchain system while using Lantern as their unified concurrency control protocol.

Another well-known work is Hyperledger Fabric \cite{fabric}, which adopts the \emph{Execute-Order-Validate (EOV)} architecture with multiple participating roles.
Specifically, clients first send transaction proposals to endorsing peers. These endorsing peers execute the transactions in parallel and generate corresponding read-write sets.
The read-write sets are then returned to the client, which assembles them into formal transactions and forwards them to the trusted ordering service. The ordering service establishes a total order for the transactions based on their arrival, packages them into blocks, and disseminates these blocks to all peers.
Upon receiving a block, each peer validates the transactions to detect conflicts before updating the world state with the valid ones.
To optimize this EOV architecture, several variants have been proposed, such as Fabric++ \cite{Fabric++}, Fabric\# \cite{fabric-sharp}, and XOX Fabric \cite{XOXFabric}.
In contrast to these system-level optimizations tailored for Fabric’s EOV architecture, Lantern is an independent protocol that can be seamlessly integrated into any system providing a globally agreed transaction sequence, where all nodes leverage Lantern to execute transactions independently and concurrently without inter-node coordination.


We have fully integrated Lantern into the open-source blockchain platform ChainMaker~\cite{chainmaker}. 
Comprehensive evaluations on both YCSB and SmallBank benchmarks demonstrate that Lantern outperforms Aria~\cite{aria} by up to $4.2\times$ in throughput. 
Furthermore, by eliminating the serial execution-replay dependency between primary and replica nodes, Lantern boosts ChainMaker's execution layer throughput by at least $2.2\times$.

\section{Background}
This section first establishes the formal definitions of deterministic concurrency control (DCC), followed by the directed graph preliminaries essential to Lantern.

\subsection{Deterministic Concurrency Control}
In distributed systems, DCC empowers all nodes to execute transactions concurrently and independently while guaranteeing identical execution results across the network.
Unlike traditional nondeterministic protocols that only ensure serializability \cite{ConcurrencyControlinDistributedDatabaseSystems, Concurrencycontrolandrecoveryindatabasesystems}, DCC xguarantees both serializability and determinism.
Let $S$ denote the initial world state, and let $T = \langle \mathit{TX}_1, \mathit{TX}_2, \dots, \mathit{TX}_n \rangle$ be a transaction sequence with a predefined order. 
The concepts of serializability and determinism are formally defined as follows:

\begin{definition}[Serializability]
	\label{def:serializability}
	A concurrent execution of $T$ is serializable if it is \emph{equivalent} to a serial execution under some permutation $\pi(T)$ of $T$. The permutation $\pi(T)$ is called the serializable order.
\end{definition}

\begin{definition}[Determinism]
	\label{def:Determinism}
	A concurrent control protocol is deterministic if multiple independent executions of $T$ on $S$ always produce the identical serializable order $\pi(T)$.
\end{definition}

In Definition~\ref{def:serializability}, ``equivalent'' means that, for every transaction, the concurrent execution and the corresponding serial execution produce identical read/write sets, thereby leading to the same final world state.

\subsection{Directed Graph}
\label{DirectedGraphBackground}
Lantern models transaction dependencies using a directed graph. 
Formally, a directed graph is represented as 
\begin{equation*}
	G=(V,E)
\end{equation*}
where $V$ and $E$ denote the sets of vertices and directed edges, respectively. 
For a vertex $v \in V$, its in-degree is the number of incoming edges pointing to $v$, whereas its out-degree is the number of outgoing edges originating from $v$. 
A vertex with an in-degree of zero is referred to as a \emph{source} vertex, while a vertex with an out-degree of zero is called a \emph{sink} vertex. 
These boundary vertices play a pivotal role in the Back-Propagation mechanism of Lantern, where the dependency state is propagated from sink vertices toward source vertices.

A directed cycle exists if a directed path originates from a vertex and eventually returns to it. 
A directed graph containing no cycles is termed a directed acyclic graph (DAG). 
The problem of removing the minimum number of vertices to eliminate all cycles from a directed graph, thereby transforming it into a DAG, is known as the Feedback Vertex Set (FVS) problem \cite{FVS}, which is NP-hard.

\section{Lantern Design}
This section begins with an overview of the Lantern, followed by a detailed breakdown of its pipeline in Sections~\ref{TransactionSelectionandParallelExecution} through~\ref{TransactionCommitment}. 
Following that, Section~\ref{Back-PropagationMechanism} introduces a crucial performance optimization called the back-propagation mechanism, whose liveness is rigorously proven in Section~\ref{LivenessofBackPropagationPhase}. 
In Section~\ref{ConflictFreeBatchSelectionMechanism}, we introduce a conflict‑free batch‑selection mechanism for RMW‑intensive scenarios.
Finally, we provide a proof of Lantern's serializability in Section~\ref{SerializabilityandDeterminism}.

\subsection{Design Overview}
\label{DesignOverview}
In blockchain systems, a \emph{block} comprises a substantial number of transactions awaiting processing. 
As illustrated in Fig.~\ref{highlevelworkflow}, Lantern initially selects a fixed number of transactions with the smallest indices from the pending sequence to initiate a processing round.
Within each round, Lantern constructs a local dependency graph for these selected transactions and applies a back-propagation mechanism to identify a deterministic, committable set of transactions.
The remaining, aborted transactions are then sorted in ascending order of their indices and reinserted at the head of the pending sequence.
Lantern subsequently retrieves transactions from this updated sequence to start the next round.
This process loops continuously until the pending transaction sequence is exhausted, marking the completion of the block processing.
The following subsections describe each core component of Lantern in detail.
\begin{figure}[t]
	\centering
	\includegraphics[width=0.9\linewidth]{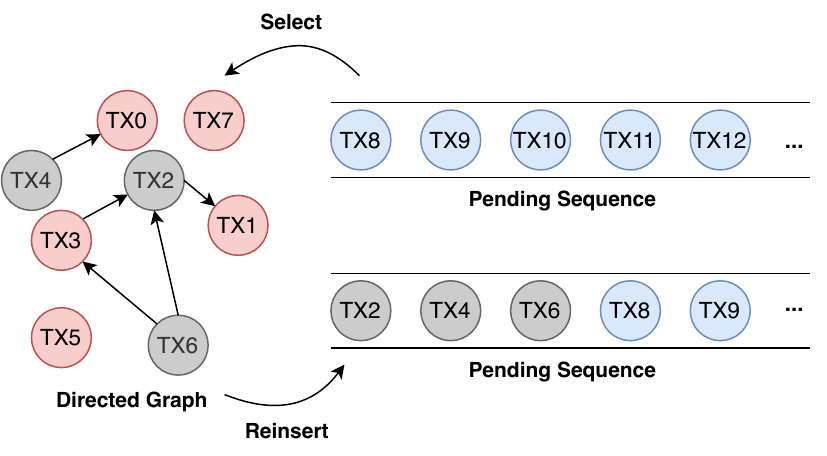}
	\caption{High-Level Transaction Processing of Lantern}
	\label{highlevelworkflow}
\end{figure}


\subsection{Transaction Selection and Parallel Execution}
\label{TransactionSelectionandParallelExecution}
Before executing transactions with Lantern, all nodes in the network reach consensus on a globally ordered transaction sequence. 
Each node then takes this identical sequence as input and executes transactions independently.
To build a dependency graph that captures transaction dependencies, Lantern first executes transactions to extract their respective read and write sets. The detailed process is as follows:

\subsubsection{Selection Phase}
Let $B_{\mathit{size}}$ denote the number of transactions processed in each execution round. To maintain determinism, every node is configured with an identical batch size $B_{\mathit{size}}$ and adheres to a deterministic selection rule: the $B_{\mathit{size}}$ transactions with the lowest indices are popped from the pending transaction queue to form the current batch. Consequently, all nodes are guaranteed to select an identical transaction set for execution.

\subsubsection{Execution Phase}
Following selection, each node executes all selected transactions concurrently against its current world state. During this speculative execution, write operations are strictly buffered rather than directly committed, ensuring that the underlying world state remains invariant. Upon completing the execution of a transaction $\mathit{TX}_i$, the node captures and caches its corresponding read set $\mathit{ReadSet(TX_i)}$ and write set $\mathit{WriteSet(TX_i)}$. Because all nodes start from an identical world state that remains unmutated throughout this phase, the generated read-write sets for each transaction are guaranteed to be deterministic and identical across the entire network.

\subsection{DAG Construction}
\label{DependencyGraphProcessing}
Following the parallel execution of the selected transactions, data conflicts may arise that threaten execution serializability. Lantern focuses exclusively on \emph{Read-After-Write (RAW)} conflicts, which are formalized as follows:

\begin{definition}[RAW Conflict]
	\label{def:RAW}
	A transaction $\mathit{TX}_i$ has a RAW conflict with a preceding transaction $\mathit{TX}_j$ ($j < i$) if $\mathit{TX}_i$ reads a data item that is written by $\mathit{TX}_j$.
\end{definition}

Specifically, for each executed transaction, we only consider whether its read set conflicts with the write sets of \emph{preceding transactions} (i.e., those with lower indices). Conversely, write operations by subsequent transactions (with higher indices) do not affect the correctness of the current transaction's reads. 
Consequently, if a subset of transactions within the current batch is deemed committable, there exists only one permutation that can serve as their serializable execution order, i.e., the \emph{ascending order}.

To systematically capture RAW conflicts among transactions in the current batch, Lantern constructs a dependency graph according to Rule~\ref{rule:GraphConstruction}:

\begin{prule}[Graph Construction]
	\label{rule:GraphConstruction}
	Each executed transaction in the batch corresponds to a vertex in the dependency graph. A directed edge ($\mathit{TX}_i \rightarrow \mathit{TX}_j$) is added from $\mathit{TX}_i$ to $\mathit{TX}_j$ if and only if $\mathit{TX}_i$ has a RAW conflict with a preceding transaction $\mathit{TX}_j$ ($j < i$). 
	Self-loops are inherently excluded since a transaction cannot form an edge to itself ($j \neq i$).
\end{prule}

Consequently, for every directed edge ($\mathit{TX}_i \rightarrow \mathit{TX}_j$) in the dependency graph, the start vertex always has a larger index than the end vertex ($i > j$). By construction, this strict topological ordering precludes any \emph{backward edges} (i.e., from lower-indexed to higher-indexed transactions), guaranteeing that the generated dependency graph is inherently \emph{acyclic}. In the remainder of this paper, we refer to this graph as a \emph{Directed Acyclic Graph (DAG)}.

To construct the DAG efficiently, Algorithm~\ref{alg:dependency_graph_construction} presents a parallel graph construction scheme utilizing the read sets $\mathit{ReadSet}(\mathit{TX}_i)$ and write sets $\mathit{WriteSet}(\mathit{TX}_i)$ cached during the Execution Phase. 
Because conflict detection for each transaction $\mathit{TX}_i$ depends solely on preceding transactions $\mathit{TX}_j$ ($j < i$), all transactions can evaluate their RAW dependencies fully in parallel (Line 3). 
To avoid lock contention during graph building, each worker thread maintains a thread-local outgoing edge set $E_i$ dedicated to $\mathit{TX}_i$ (Line 4). For each preceding transaction $\mathit{TX}_j$, thread $i$ scans $j$'s write set; upon detecting the first write-read overlap with $\mathit{ReadSet}(\mathit{TX}_i)$, the directed edge $(i, j)$ is inserted into $E_i$, and the inner loop immediately breaks to bypass redundant key checks for $\mathit{TX}_j$ (Lines 5–9). 
Because thread $i$ exclusively mutates its local outgoing edge set $E_i$, edge generation across threads is inherently collision-free and lockless. 
Finally, all thread-local edge sets are merged into $E = \bigcup_{i=0}^{n-1} E_i$ (Line 12) to yield the complete dependency graph $\mathit{DAG}=(V,E)$.

\begin{algorithm}[t]
	\caption{Dependency Graph Construction}
	\label{alg:dependency_graph_construction}
	
	\SetAlgoVlined
	
	\SetKwInOut{Input}{Input}
	\SetKwInOut{Output}{Output}
	
	\Input{executed transaction sequence $\mathcal{T}=(\mathit{TX}_0,\mathit{TX}_1,\ldots,\mathit{TX}_{n-1})$}
	\Output{dependency graph $\mathit{DAG}=(V,E)$}
	
	$V\leftarrow\{0,1,\ldots,n-1\}$\;
	$E\leftarrow\emptyset$\;
	
	\ForEach{$i\in\{0,1,\ldots,n-1\}$ \textbf{in parallel}}{
		$E_i \leftarrow \emptyset$\tcp*{thread-local edge set for $\mathit{TX}_i$}
		\For{$j\leftarrow 0$ \KwTo $i-1$}{
			\ForEach{$w\in\mathit{WriteSet}(\mathit{TX}_j)$}{
				\If{$w\in\mathit{ReadSet}(\mathit{TX}_i)$}{
					$E_i\leftarrow E_i\cup\{(i,j)\}$\;
					\textbf{break}\tcp*{one conflict is sufficient}
				}
			}
		}
	}
	$E \leftarrow \bigcup_{i=0}^{n-1} E_i$\;
	\Return $\mathit{DAG}=(V,E)$\;
	
\end{algorithm}

The constructed DAG strictly guarantees determinism and consistency across all network nodes. This inherently holds because transaction indices are globally identical, and speculative execution yields strictly identical read and write sets for each transaction.

\subsection{Transaction Commitment}
\label{TransactionCommitment}

\begin{figure}[t]
	\centering
	\includegraphics[width=0.9\linewidth]{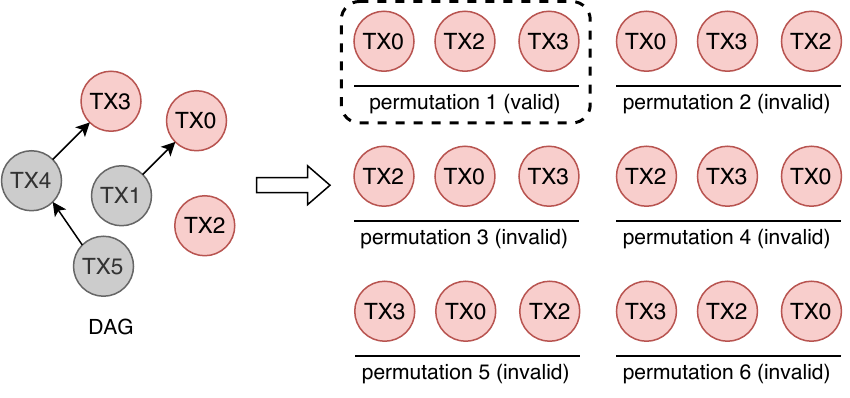}
	\caption{Only the ascending order of zero-out-degree transactions forms a serializable execution order for the current round.}
	\label{permutation}
\end{figure}

Following Section~\ref{DependencyGraphProcessing}, we obtain a DAG that precisely captures the RAW conflicts among transactions. Crucially, any transaction corresponding to a vertex with zero out-degree—including both sink vertices and isolated vertices—is guaranteed not to read any data item modified by its preceding transactions. Consequently, its read operations are completely independent of writes from its preceding transactions, drawing data exclusively from the current world state. Therefore, all vertices with zero out-degree can be safely committed in the current round.

The first challenge lies in determining which serial execution order is equivalent to the concurrent execution of these zero-out-degree transactions. As illustrated in Fig.~\ref{permutation}, three committable transactions yield six possible permutations. Because Lantern exclusively tracks each transaction's RAW conflicts against its preceding transactions, only the ascending order (Permutation 1 in Fig.~\ref{permutation}) guarantees a valid serializable execution order. 
This holds because under the ascending order, every transaction $\mathit{TX}_i$ is preceded solely by transactions $\mathit{TX}_j$ with smaller indices ($j < i$). Since $\mathit{TX}_i$ possesses a zero out-degree in the DAG, it is explicitly guaranteed to have no RAW dependencies on any preceding transaction $\mathit{TX}_j$. Consequently, executing them serially in ascending order yields the exact same read values as their parallel speculative execution against the initial world state.
Conversely, all other permutations are invalid. Consider Permutation 2 $\langle \mathit{TX}_0, \mathit{TX}_3, \mathit{TX}_2 \rangle$, where $\mathit{TX}_3$ precedes $\mathit{TX}_2$. In a true serial execution under this order, $\mathit{TX}_2$'s reads must observe any updates previously committed by $\mathit{TX}_3$. However, during the speculative Execution Phase, $\mathit{TX}_2$ executed against the initial, invariant world state—which omits $\mathit{TX}_3$'s writes. This introduces a potential $\mathit{TX}_2 \rightarrow \mathit{TX}_3$ dependency. Because our DAG construction intentionally omits such backward edges to preclude cycles, we cannot verify whether such a potential write-then-read conflict exists between $\mathit{TX}_3$ and $\mathit{TX}_2$, which would invalidate Permutation 2. The same reasoning applies to the remaining four permutations, as each places at least one higher-indexed transaction before a lower-indexed one.

The second challenge lies in ensuring \emph{true} equivalence to serial execution in ascending order. Regarding read operations, Lantern guarantees that no committable transaction reads any data item modified by its preceding transactions, thereby eliminating dirty reads. However, regarding write operations, write-write (WW) conflicts may still occur when multiple transactions update the same data key. To maintain full equivalence to ascending serial execution, Lantern must preserve the overwrite semantics inherent to serial processing—specifically, when multiple transactions write to the same key, the transaction appearing later in the sequence must overwrite the values written by earlier ones. Therefore, the commit rule is specified as follows:

\begin{prule}[Commit]
	\label{rule:Commit}
	For all committable transactions, sort them in ascending order of their indices and apply their write sets to the world state. If multiple transactions write to the same key, preserve only the value written by the transaction with the highest index.
\end{prule}

Algorithm~\ref{alg:transaction_commit} details our commitment procedure. By applying the write set of each transaction to the world state serially, this process naturally satisfies the required overwrite semantics. Since this state update phase is lightweight, we intentionally forgo parallelization techniques—such as Compare-and-Swap (CAS)—to eliminate unnecessary concurrency overhead.

This commitment phase strictly guarantees network-wide determinism across iterative rounds. This holds because all nodes construct an identical DAG (Section~\ref{DependencyGraphProcessing}) and deterministically derive the exact same set of zero-out-degree transactions for commitment. With globally uniform write sets, the resulting world state remains perfectly consistent across all nodes, serving as the deterministic initial state for the next round.
Concurrently, uncommitted transactions (represented by gray vertices in Fig.~\ref{permutation}) are sorted in ascending order of their indices and reinserted at the head of the pending queue, yielding an updated sequence that is likewise identical across all nodes.
By ensuring that every round begins with an identical world state and pending transaction sequence, Lantern guarantees deterministic transaction processing across the entire network.

\begin{algorithm}[t]
	\caption{Transaction Commitment}
	\label{alg:transaction_commit}
	
	\SetAlgoVlined
	
	\SetKwInOut{Input}{Input}
	\SetKwInOut{Output}{Output}
	
	\Input{$DAG=(V,E)$; transaction set $\mathcal{T}$}
	\Output{Updated world state}
	
	$S \leftarrow \{v\in V \mid \mathrm{outDegree}(v)=0\}$\;
	
	sort $S$ in ascending transaction-index order\;
	
	\ForEach{$i\in S$}{
		\ForEach{$w\in \mathit{WriteSet(TX_i)}$}{
			apply $w$ to the world state\;
		}
	}
	
	\Return updated world state\;
	
\end{algorithm}

\subsection{Back-Propagation Mechanism}
\label{Back-PropagationMechanism}
Sections~\ref{TransactionSelectionandParallelExecution} through \ref{TransactionCommitment} detail the baseline execution pipeline of Lantern. However, restricting commitments strictly to zero-out-degree vertices is inherently conservative, prematurely aborting conflict-free transactions that reside deeper in the dependency chain. To overcome this limitation, we introduce a novel \emph{Back-Propagation} mechanism that systematically salvages additional committable transactions while strictly preserving network-wide determinism.

To establish the foundation of status propagation, we first clarify the precise semantics of dependency edges.
A directed edge $\mathit{TX}_i \rightarrow \mathit{TX}_j$ indicates a RAW dependency of $\mathit{TX}_i$ on $\mathit{TX}_j$.
Since all transactions are executed against the same initial world state during the Execution Phase, committing $\mathit{TX}_j$ in the current round invalidates the execution of $\mathit{TX}_i$ due to a ``dirty read'', forcing $\mathit{TX}_i$ to abort. Conversely, for $\mathit{TX}_i$ to safely commit, $\mathit{TX}_j$ must be aborted to ensure that the state read by $\mathit{TX}_i$ remains ``clean''.

Consider the dependency graph in Fig.~\ref{bpexample}. Under the baseline approach, only the two sink vertices ($\mathit{TX}_0$ and $\mathit{TX}_2$) are committed, while all remaining transactions are aborted. Specifically, because $\mathit{TX}_1$ and $\mathit{TX}_3$ directly depend on these two sink transactions, they must be aborted to avoid dirty reads. Crucially, once $\mathit{TX}_1$ and $\mathit{TX}_3$ are deterministically aborted, their write operations are discarded, effectively eliminating the dependency edges they induce on upstream transactions.

\begin{figure}[t]
	\centering
	\includegraphics[width=\linewidth]{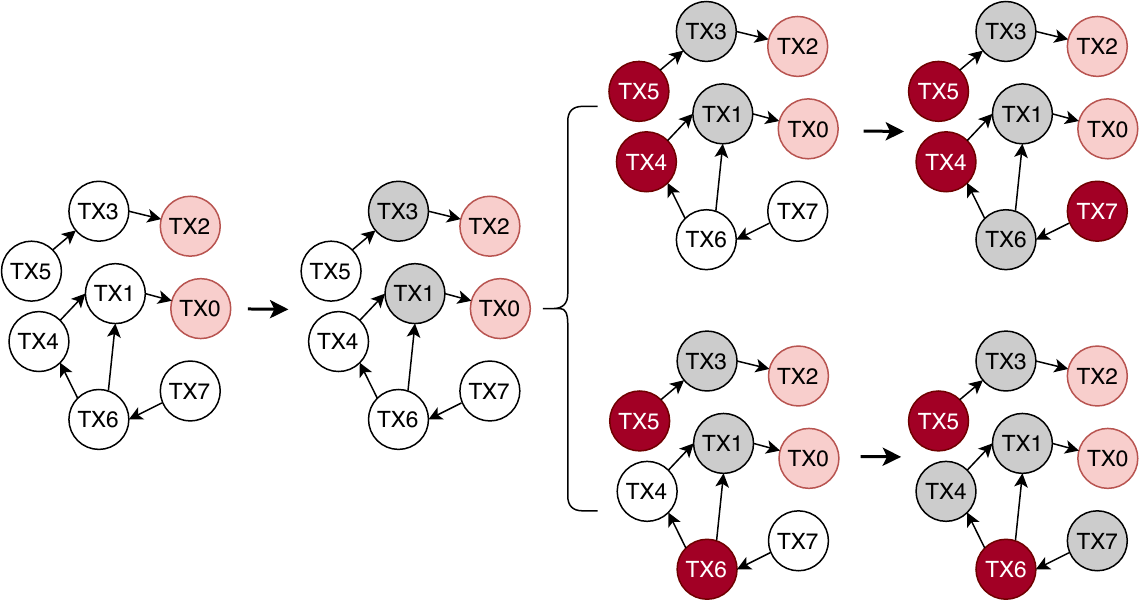}
	\caption{A motivating example demonstrating the potential of transaction salvage and the issue of nondeterminism.}
	\label{bpexample}
\end{figure}

As illustrated in the upper branch of Fig.~\ref{bpexample}, if we salvage transactions along the upper path, $\mathit{TX}_4$ and $\mathit{TX}_5$ become eligible for commitment alongside the baseline sinks. Furthermore, committing $\mathit{TX}_4$ forces its reader $\mathit{TX}_6$ to abort to prevent a dirty read, which subsequently frees $\mathit{TX}_7$ for commitment. In total, five transactions—$\{\mathit{TX}_0, \mathit{TX}_2, \mathit{TX}_4, \mathit{TX}_5, \mathit{TX}_7\}$—can be safely committed. Because these salvaged transactions are guaranteed to be RAW-conflict-free against their committable predecessors, they can be safely committed in ascending order without violating serializability.

However, as shown in the lower branch of Fig.~\ref{bpexample}, an alternative path emerges once $\mathit{TX}_1$ is deterministically aborted. Specifically, the dependency edge restricting $\mathit{TX}_6$ also disappears. If we instead mark $\mathit{TX}_6$ as committable, $\mathit{TX}_4$ must be aborted to ensure that $\mathit{TX}_6$'s read set remains uncontaminated. Concurrently, $\mathit{TX}_7$ must also be aborted to prevent a dirty read from $\mathit{TX}_6$. This path yields a completely different set of committable transactions.

While both branches successfully salvage more transactions than the baseline, the existence of multiple valid execution paths introduces severe \emph{nondeterminism}. Thus, the core challenge lies in designing a mechanism that expands the committable set without compromising state consistency across distributed nodes. To achieve this, Lantern adopts a novel Back-Propagation mechanism inspired by error backpropagation in neural networks. Starting from the sink vertices, status markings propagate backward toward the source vertices according to four deterministic rules:

\begin{prule}[Back-Propagation]
	\label{rule:BackPropagation}
	\begin{enumerate}[label=(\roman*)]
		\item \textbf{Initialization (Red):} All zero-out-degree vertices in the DAG are initially marked red.
		\item \textbf{Conflict Propagation (Gray):} Any unmarked vertex with a directed edge pointing to a red vertex is marked gray.
		\item \textbf{Salvage Propagation (Red):} Any unmarked vertex whose successors are \emph{exclusively} marked gray is marked red.
		\item \textbf{Termination:} Steps (ii) and (iii) are repeated iteratively in alternating phases until no further state transitions occur.
	\end{enumerate}
\end{prule}

Specifically, (i)~\emph{Initialization} directly corresponds to the baseline commit scheme presented in Section~\ref{TransactionCommitment}. 
(ii)~\emph{Conflict Propagation} identifies transactions that must abort because they would incur dirty reads from guaranteed committable transactions. 
(iii)~In \emph{Salvage Propagation}, the condition ``exclusively'' is paramount: if an unmarked vertex points to both a gray vertex and an unresolved unmarked vertex, its status remains ambiguous because its unmarked successor's outcome is still unresolved. Requiring all successors to be gray guarantees that no committed predecessor will invalidate this vertex's read set, ensuring it can safely commit.

\begin{algorithm}[t]
	\caption{Back-Propagation}
	\label{alg:back-propagation}
	
	\SetAlgoVlined
	
	\SetKwInOut{Input}{Input}
	\SetKwInOut{Output}{Output}
	
	\Input{Directed acyclic graph $G = (V,E)$}
	\Output{Committable transaction set $\mathcal{R}$; Uncommittable transaction set $\mathcal{G}$}
	
	\ForEach{$v \in V$}{
		$\mathit{mark}[v] \leftarrow \textit{unmarked}$\;
	}
	
	$\mathit{layer} \leftarrow \emptyset$\;
	
	\ForEach{$v \in V$}{
		\If{$outDegree(v)=0$}{
			$\mathit{mark}[v] \leftarrow red$\;
			add $v$ to $\mathit{layer}$\;
		}
	}
	
	$\mathit{isRedLayer} \leftarrow true$\;
	
	\While{there exists unmarked vertices in $V$}{
		
		$\mathit{candidates} \leftarrow \emptyset$\;
		
		\ForEach{$v \in \mathit{layer}$}{
			\ForEach{$u \in reverseNeighbors(v)$}{
				\If{$\mathit{mark}[u] = \textit{unmarked}$}{
					add $u$ to $\mathit{candidates}$\;
				}
			}
		}
		
		$\mathit{nextLayer} \leftarrow \emptyset$\;
		
		\eIf{$\mathit{isRedLayer} = true$}{
			
			\ForEach{$u \in \mathit{candidates}$}{
				$\mathit{mark}[u] \leftarrow gray$\;
				add $u$ to $\mathit{nextLayer}$\;
			}
			
		}{
			
			$\mathit{redCandidates} \leftarrow \emptyset$\;
			
			\ForEach{$u \in \mathit{candidates}$}{
				
				$\mathit{allMarked} \leftarrow true$\;
				
				\ForEach{$w \in neighbors(u)$}{
					\If{$\mathit{mark}[w] = \textit{unmarked}$}{
						$\mathit{allMarked} \leftarrow false$\;
						\textbf{break}\;
					}
				}
				
				\If{$\mathit{allMarked} = true$}{
					add $u$ to $\mathit{redCandidates}$\;
				}
			}
			
			\ForEach{$u \in \mathit{redCandidates}$}{
				$\mathit{mark}[u] \leftarrow red$\;
				add $u$ to $\mathit{nextLayer}$\;
			}
		}
		
		$\mathit{layer} \leftarrow \mathit{nextLayer}$\;
		$\mathit{isRedLayer} \leftarrow \neg \mathit{isRedLayer}$\;
	}
	
	$\mathcal{R} \leftarrow \{v \in V \mid \mathit{mark}[v]=red\}$\;
	$\mathcal{G} \leftarrow \{v \in V \mid \mathit{mark}[v]=gray\}$\;
	
	\Return $\mathcal{R}, \mathcal{G}$\;
\end{algorithm}

Algorithm~\ref{alg:back-propagation} formalizes this iterative layer-by-layer procedure. In each iteration, Lantern first collects an unmarked candidate set $\mathit{candidates}$ comprising the reverse neighbors (i.e., predecessors) of the current layer (Lines 9--14). When the current layer is red, all vertices in $\mathit{candidates}$ are directly marked gray (Lines 16--19). Conversely, when the current layer is gray, if all successors of a candidate vertex are confirmed to be marked, those successors are guaranteed to be exclusively gray, allowing the candidate vertex to be safely marked red (Lines 22--29).
This guarantee holds because candidates are derived solely from the current gray layer; thus, a candidate vertex can point only to gray or unmarked vertices, but never to a red one. Had it possessed a directed edge to a previously generated red vertex, it would have been unconditionally marked gray in the iteration immediately following that red layer's processing, directly contradicting its current unmarked status.

When candidate vertices qualify to be marked red, they are first collected into a temporary set, $\mathit{redCandidates}$ (Lines 28--29), rather than being marked immediately. The actual state transitions are performed only after all candidate vertices in the current layer have been fully evaluated (Lines 30--32). This deferred-marking strategy ensures that every decision within the same propagation layer is evaluated against a static and consistent graph state. Without this deferral, immediate state updates would allow newly marked red vertices to improperly influence the evaluation of remaining candidates in the same layer. Such premature updates would not only compromise algorithm correctness by causing erroneous transitions, but also introduce non-determinism when candidate vertices are processed in an arbitrary traversal order. 

Fig.~\ref{fig:full_width_results} illustrates a step-by-step example of Back-Propagation. 
In Round 1, four zero-out-degree vertices are initially marked red. 
In Round 2, Lantern identifies the reverse neighbors of this initial layer, forming the candidate set $\{\mathit{TX}_1, \mathit{TX}_6, \mathit{TX}_{10}\}$. Because the preceding layer was red, all vertices in this candidate set are unconditionally marked gray. 
In Round 3, the candidates $\{\mathit{TX}_2, \mathit{TX}_4, \mathit{TX}_7, \mathit{TX}_{12}, \mathit{TX}_{13}\}$ are evaluated. Since $\mathit{TX}_7$ points to $\mathit{TX}_4$, which remains unmarked, $\mathit{TX}_7$ retains its unmarked status. The remaining candidates are marked red because their successors are exclusively gray. 
In Round 4, $\mathit{TX}_7$ is again collected into the candidate set $\{\mathit{TX}_7, \mathit{TX}_9, \mathit{TX}_{11}\}$, and all candidates are unconditionally marked gray.
Finally, in Round 5, $\mathit{TX}_{14}$ is marked red, completing the propagation process.

Back-Propagation strictly guarantees network-wide determinism, ensuring that every node produces an identically colored DAG. This holds because the underlying DAG topology is identical across all participating nodes. Specifically, all nodes initialize the exact same set of red vertices. Inductively, because each subsequent marking round depends solely on the deterministic state of preceding layers, both (1) gray vertices assigned under Conflict Propagation and (2) red vertices assigned under Salvage Propagation remain strictly uniform across nodes. By induction, this state consistency propagates throughout all iterations, guaranteeing an identical coloring outcome for the DAG.

Once Back-Propagation completes, all transactions marked red are collected as the final committable set. These transactions are sorted in ascending order of their indices and serially applied to the world state following Rule~\ref{rule:Commit}, while uncommitted gray transactions are sorted in ascending order of their indices and reinserted at the head of the pending queue.

The proposed Back-Propagation mechanism increases the number of committable transactions while strictly preserving determinism across all participating nodes. 
However, a critical question arises: could this iterative marking process encounter a liveness bottleneck, causing the algorithm to stall or fail to terminate? 
To establish the theoretical soundness of our design and address this concern, we formally prove the \emph{liveness} of the Back-Propagation mechanism in Section~\ref{LivenessofBackPropagationPhase}.

\begin{figure*}[t]
	\centering
	
	\begin{subfigure}{0.175\textwidth}
		\includegraphics[width=\linewidth]{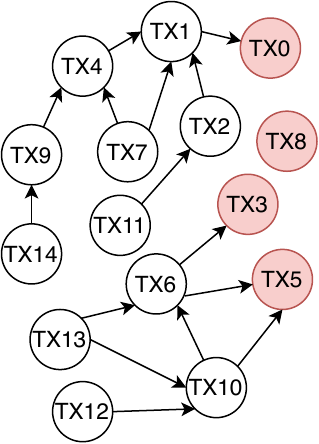}
		\caption{Round 1}
		\label{Round1}
	\end{subfigure}
	\hfill 
	\begin{subfigure}{0.175\textwidth}
		\includegraphics[width=\linewidth]{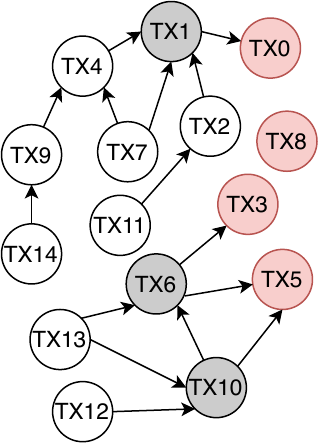}
		\caption{Round 2}
		\label{Round2}
	\end{subfigure}
	\hfill
	\begin{subfigure}{0.175\textwidth}
		\includegraphics[width=\linewidth]{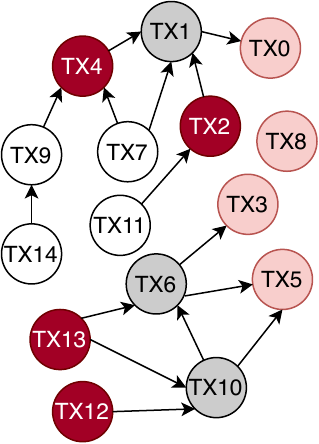}
		\caption{Round 3}
		\label{Round3}
	\end{subfigure}
	\hfill
	\begin{subfigure}{0.175\textwidth}
		\includegraphics[width=\linewidth]{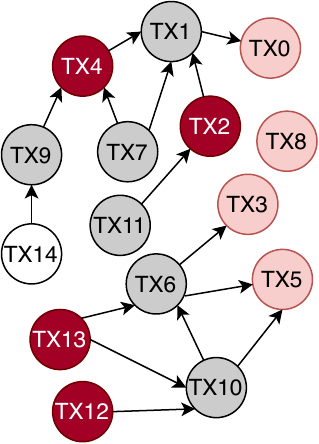}
		\caption{Round 4}
		\label{Round4}
	\end{subfigure}
	\hfill
	\begin{subfigure}{0.175\textwidth}
		\includegraphics[width=\linewidth]{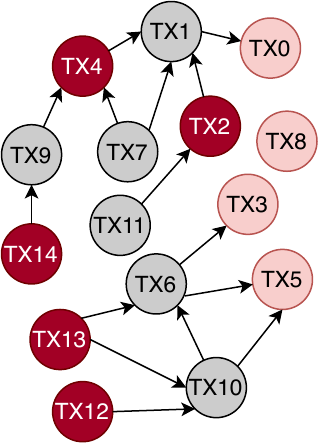}
		\caption{Round 5}
		\label{Round5}
	\end{subfigure}
	
	\caption{An Example of Back-Propagation in Lantern}
	\label{fig:full_width_results}
\end{figure*}

\subsection{Liveness of Back-Propagation}
\label{LivenessofBackPropagationPhase}
We first introduce two supporting lemmas to dissect the iterative marking mechanism before proving the main theorem.

\begin{lemma}
	\label{lem:nonempty_candidates}
	In any iteration of the Back-Propagation process, as long as the set of unmarked vertices in the DAG is non-empty, the generated $\mathit{candidates}$ set is guaranteed to be non-empty.
\end{lemma}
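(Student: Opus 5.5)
The plan is to prove the lemma by strengthening it to a pair of invariants that hold at the start of every iteration of the \textbf{while} loop in Algorithm~\ref{alg:back-propagation}, and then deduce non-emptiness from acyclicity. The two invariants are:

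\textbf{(I)} every unmarked vertex all of whose successors are marked has at least one successor in the current $\mathit{layer}$;

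\textbf{(II)} no unmarked vertex points to a red vertex, except possibly to red vertices lying in the current $\mathit{layer}$ (which can only happen when $\mathit{isRedLayer}=true$).

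\textbf{Deducing the lemma.} Assume (I) and (II) hold and the set $U$ of unmarked vertices is non-empty. The subgraph induced by $U$ is a sub-DAG, so it has a vertex $u$ with no successor inside $U$; hence all successors of $u$ are marked. Moreover, $u$ has out-degree at least one, since every zero-out-degree vertex was marked red at initialization. By (I), $u$ has a successor in $\mathit{layer}$, so $u\in reverseNeighbors(v)$ for some $v\in\mathit{layer}$ and is collected into $\mathit{candidates}$.

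\textbf{Base case.} At the first iteration, $\mathit{layer}$ is exactly the set of all marked vertices, namely the initial red ones, and there are no gray vertices. So (I) holds by the out-degree remark above. (II) holds because every red vertex is in $\mathit{layer}$.

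\textbf{Red-layer step.} Every unmarked predecessor of the red layer is turned gray, and these vertices form $\mathit{nextLayer}$.
\begin{itemize}
\item (II) for the next, gray, iteration: a remaining unmarked vertex cannot point to the old red layer (it would have become gray), and by the old (II) it points to no other red vertex.
\item (I): take a remaining unmarked $u$ whose successors are all marked; they are all gray by (II). Suppose none of them lies in the new gray layer. Then they were all gray, hence marked, already before the step. The old (I) then gives $u$ a successor in the old red layer, so $u$ would have been grayed, a contradiction.
\end{itemize}

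\textbf{Gray-layer step.} This is the step I expect to be the main obstacle, because of the deferred marking. The key point is that this step only creates red marks. So if a candidate $u$ had all successors marked at evaluation time, it is in $\mathit{redCandidates}$. Conversely, a vertex that stays unmarked had an unmarked successor at evaluation time, or was not a candidate at all.
\begin{itemize}
\item (II): newly red vertices form the new layer, and the old (II) excludes edges from unmarked vertices to older red vertices.
\item (I): take an unmarked $u$ after the step whose successors are all marked, and suppose none of them lies in the new red layer. Then all of them are gray and were marked before the step.
\begin{itemize}
\item If one of them is in the processed gray layer, $u$ was a candidate with all successors marked at evaluation, so it would now be red.
\item Otherwise the old (I) forces a successor in the old gray layer, which is the same contradiction.
\end{itemize}
\end{itemize}
This closes the induction. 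The delicate part is carefully tracking which marks existed at evaluation time versus after the deferred update.
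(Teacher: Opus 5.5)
Your argument is correct, but it is organized differently from the paper's. The paper sets up no loop invariants. It takes the same local sink $u$ of the unmarked subgraph and assumes $u$ has no successor in the current layer $\mathbf{L_k}$. It then chooses the \emph{most recently generated} layer $\mathbf{L_t}$ ($t<k$) that contains a successor of $u$. When $\mathbf{L_t}$ was processed, $u$ was unmarked and hence a candidate, and all of its successors (lying in layers up to $\mathbf{L_t}$) were already marked. So $u$ is grayed if $\mathbf{L_t}$ is red, and passes the $\mathit{allMarked}$ test if $\mathbf{L_t}$ is gray, which is a contradiction either way.

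This extremal choice collapses your layer-by-layer induction into one step. It also sidesteps the deferred-marking bookkeeping you expected to be the main obstacle, since nothing about $u$'s successors is pending when $\mathbf{L_t}$ is processed. Your invariant (I) is exactly what the paper's contradiction argument establishes, and that argument applies verbatim to any unmarked vertex whose successors are all marked.

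(I) is inductive on its own. In either kind of step the newly marked vertices are exactly $\mathit{nextLayer}$, so ``no successor in the new layer'' already gives ``all successors were marked before the step'', and that is all your case analysis needs. Invariant (II) is therefore superfluous for this lemma.

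(II) is still useful. It is precisely the fact used in the gray case of Lemma~\ref{lem:mark_at_least_one}: candidates drawn from a gray layer point only to gray or unmarked vertices. The paper justifies this only informally in the prose accompanying Algorithm~\ref{alg:back-propagation}, so your framework makes that step rigorous as well.
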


\begin{proof}
	During initialization, all global sink vertices and isolated vertices  in DAG are marked red, meaning any remaining unmarked vertices must have outgoing edges.
	Let $\mathbf{U}$ be the non-empty set of unmarked vertices, and let $\mathbf{L_k}$ be the current $\mathit{layer}$ serving as the active frontier. Since DAG is a acyclic, the subgraph $G'[\mathbf{U}]$ induced by $\mathbf{U}$ must also be a DAG, which guarantees the existence of at least one local sink vertex $u \in \mathbf{U}$. By definition, $u$ has no outgoing edges targeting any other vertex within $G'[\mathbf{U}]$. 
	However, since $u$ is a unmarked vertex in DAG which must have outgoing edges, all successors of $u$ must reside outside of $\mathbf{U}$, meaning they have already been marked in previous iterations.
	
	We now prove by contradiction that $u$ must possess at least one directed edge pointing to the current layer $\mathbf{L_k}$. Suppose $u$ does not point to $\mathbf{L_k}$. Then all successors of $u$ must belong to strictly older historical layers. Let $\mathbf{L_t}$ ($t < k$) be the most recently generated layer among all successors of $u$. During the historical iteration when $\mathbf{L_t}$ served as the active layer, $u$ would have been fetched into the $\mathit{candidates}$ set. At that time, we evaluate the state transition of $u$:
	\begin{enumerate}[label=(\roman*)]
		\item {If $\mathbf{L_t}$ was a red layer:} $u$ would be directly marked gray.
		\item {If $\mathbf{L_t}$ was a gray layer:} The algorithm checks the $\mathit{allMarked}$ condition for $u$.
		Because $\mathbf{L_t}$ is the most recent layer among u's successors, all successors of $u$ are already marked. Therefore, $u$ satisfies the $\mathit{allMarked}$ condition and is marked red.
	\end{enumerate}
	In either case, $u$ would have been successfully marked during that historical iteration. This directly contradicts the established premise that $u$ is currently an unmarked vertex. Therefore, the assumption is false. $u$ must possess at least one successor in the current layer $\mathbf{L_k}$. 
	As a result, $u$ will inevitably be fetched into $\mathit{candidates}$, ensuring $\mathit{candidates} \neq \emptyset$.
\end{proof}

\begin{lemma}
	\label{lem:mark_at_least_one}
	Given a non-empty $\mathit{candidates}$ set in any iteration, at least one candidate within this set is guaranteed to be marked.
\end{lemma}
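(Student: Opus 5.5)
We argue by cases on the color of the current layer $\mathbf{L_k}$. If $\mathbf{L_k}$ is red ($\mathit{isRedLayer}=true$), Algorithm~\ref{alg:back-propagation} (Lines 16--19) unconditionally marks every candidate gray. Since $\mathit{candidates}\neq\emptyset$, at least one vertex is marked and the claim is immediate. All the work is in the case where $\mathbf{L_k}$ is gray, since there a candidate is marked red only if it satisfies the $\mathit{allMarked}$ condition.

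For the gray case we adapt the local-sink idea from the proof of Lemma~\ref{lem:nonempty_candidates}. Let $\mathbf{U}$ be the set of unmarked vertices and $C\subseteq\mathbf{U}$ the non-empty candidate set. We will show that some $u\in C$ has no successor in $\mathbf{U}$; such a $u$ passes the $\mathit{allMarked}$ test and is added to $\mathit{redCandidates}$. The deferred-marking strategy is essential here: all tests in the layer are evaluated against the same snapshot of $\mathit{mark}$. Hence $u$ is marked red no matter in what order the candidates are processed.

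The natural choice of $u$ is a local sink of the induced DAG $G'[\mathbf{U}]$. However, a local sink of $\mathbf{U}$ need not be adjacent to $\mathbf{L_k}$. The key step is therefore to show that some local sink of $G'[\mathbf{U}]$ lies in $C$. Lemma~\ref{lem:nonempty_candidates} supplies this directly: its proof shows that \emph{every} local sink $u$ of $G'[\mathbf{U}]$ has a successor in the current layer $\mathbf{L_k}$. The argument is that otherwise, at the iteration of the most recent layer $\mathbf{L_t}$ containing a successor of $u$, $u$ would already have been marked, either gray (if $\mathbf{L_t}$ was red) or red (if $\mathbf{L_t}$ was gray, because all its successors were then marked).

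We then pick any local sink $u$ of $G'[\mathbf{U}]$; one exists because $\mathbf{U}\neq\emptyset$, as candidates exist, and $G'[\mathbf{U}]$ is acyclic. By the above, $u\in C$. All successors of $u$ lie outside $\mathbf{U}$ and are therefore marked, so $u$ is marked red in this iteration.

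The main obstacle is verifying that the "most recent layer" argument stays valid in the gray case. When $\mathbf{L_t}$ was processed, successors of $u$ lying in layers produced \emph{after} $\mathbf{L_t}$ could have been unmarked at that time, which would break the claim that $u$ was then marked. This is excluded by the choice of $\mathbf{L_t}$ as the latest layer containing a successor of $u$: every successor of $u$ was marked no later than $\mathbf{L_t}$. We also need that marked vertices are exactly the union of the layers generated so far. This holds because every marking in Algorithm~\ref{alg:back-propagation} adds the vertex to $\mathit{nextLayer}$, which becomes the next $\mathit{layer}$. With this bookkeeping, the red-candidate case closes, and together with the red-layer case the lemma follows.
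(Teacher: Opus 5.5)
Your proof is correct and is essentially the paper's argument. The red-layer case is immediate. In the gray-layer case, the argument of Lemma~\ref{lem:nonempty_candidates} shows that a local sink $u$ of $G'[\mathbf{U}]$ is a candidate whose successors are all marked, so it passes the $\mathit{allMarked}$ test and is marked red. The paper additionally notes that gray-layer candidates cannot point to red vertices, so $u$'s successors are exclusively gray as Rule~\ref{rule:BackPropagation}(iii) requires; this is not needed for the lemma as stated. Your sentence that a local sink ``need not be adjacent to $\mathbf{L_k}$'' should read ``is not obviously adjacent,'' since you then prove that every local sink is.
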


\begin{proof}
	We evaluate the two propagation branches based on the current $\mathit{layer}$ $\mathbf{L_t}$:
	\begin{enumerate}[label=(\roman*)]
		\item {If $\mathbf{L_t}$ was a red layer:} All vertices in $\mathit{candidates}$ are unconditionally marked gray. Since $\mathit{candidates} \neq \emptyset$, at least one vertex transitions to the gray state.
		
		\item {If $\mathbf{L_t}$ was a gray layer:} First, any vertex in $\mathit{candidates}$ can only point to gray or unmarked vertices. Second, recall the local sink vertex $u$ of the global unmarked subgraph $G'[\mathbf{U}]$ from Lemma~\ref{lem:nonempty_candidates}, which is guaranteed to be fetched into $\mathit{candidates}$. As the sink of $G'[\mathbf{U}]$, $u$ has no outgoing edges to any unmarked vertices. Combining these two constraints, the outgoing edges of $u \in \mathit{candidates}$ must exclusively point to gray vertices. Therefore, $u$ is guaranteed to be marked red.
	\end{enumerate}
	In both cases, at least one unmarked candidate vertex successfully transitions to a marked state.
\end{proof}

\begin{theorem}[Liveness]
	\label{thm:termination}
	If the dependency graph $G=(V,E)$ is acyclic (DAG), the Back-Propagation process satisfies the liveness property and is guaranteed to terminate within a finite number of iterations.
\end{theorem}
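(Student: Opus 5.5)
The proof will be a potential-function argument on the number of unmarked vertices, with Lemma~\ref{lem:nonempty_candidates} and Lemma~\ref{lem:mark_at_least_one} supplying the strict decrease at every iteration. Let $U_k$ denote the set of unmarked vertices at the start of the $k$-th iteration of the \textbf{while} loop in Algorithm~\ref{alg:back-propagation}. The initialization phase is a single finite pass over $V$, so it terminates trivially. I therefore only need to show that the loop executes finitely many times, and that on exit every vertex carries a color in $\{\text{red},\text{gray}\}$.

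First, marks are never revoked. Inspection of the algorithm shows that a vertex is only ever assigned a color when it is currently unmarked. Hence $U_{k+1}\subseteq U_k$, and $|U_k|$ is a non-increasing quantity bounded below by $0$.

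Second, the decrease is strict. Suppose iteration $k$ is entered, so the loop guard gives $U_k\neq\emptyset$. Lemma~\ref{lem:nonempty_candidates} then gives $\mathit{candidates}\neq\emptyset$, and Lemma~\ref{lem:mark_at_least_one} gives that at least one candidate receives a mark. Since candidates are unmarked by construction, $|U_{k+1}|\le |U_k|-1$. Starting from $|U_1|\le |V|$, the loop therefore runs at most $|V|$ times before the guard fails. At that point all vertices are red or gray, and $\mathcal{R},\mathcal{G}$ are well-defined outputs.

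Third, each individual iteration must itself be finite. This holds because it consists of loops over the finite sets $\mathit{layer}$, $\mathit{candidates}$, and neighbor lists, so the total work is polynomial (at most $O(|V|\cdot(|V|+|E|))$).

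The main obstacle is that both lemmas implicitly rely on an invariant that must hold at every iteration. The invariant is that the current layer $\mathbf{L_k}$ is exactly the set of vertices marked in the previous iteration, and that the layers alternate red/gray. In particular, the argument of Lemma~\ref{lem:nonempty_candidates} needs every marked vertex to belong to some layer that served as the active frontier. I would make this explicit by a short induction on $k$. The base case is that $\mathbf{L_1}$ is the set of red sinks. For the inductive step, $\mathit{nextLayer}$ is precisely the set of newly marked vertices, which is non-empty by Lemma~\ref{lem:mark_at_least_one}. This guarantees that the frontier never becomes empty while unmarked vertices remain, so the lemmas apply at every step. Acyclicity enters only through Lemma~\ref{lem:nonempty_candidates}, via the existence of a local sink in $G'[\mathbf{U}]$.
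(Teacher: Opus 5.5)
Your proposal is correct and takes the same route as the paper: Lemma~\ref{lem:nonempty_candidates} and Lemma~\ref{lem:mark_at_least_one} force at least one unmarked vertex to become marked in every iteration, and because marks are never revoked, the loop runs at most $|V|$ times. Your extra bookkeeping makes explicit what the paper's proof leaves implicit: the monotonicity of the unmarked set, the finiteness of each iteration, and the layer invariant that every marked vertex lies in a previously processed frontier, which the lemmas tacitly rely on.
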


\begin{proof}
	During initialization, all global zero-out-degree vertices of DAG are successfully marked red, establishing the initial non-empty $\mathit{layer}$. In each subsequent iteration, if unmarked vertices remain, Lemma~\ref{lem:nonempty_candidates} guarantees that the fetched $\mathit{candidates}$ set is non-empty. Subsequently, Lemma~\ref{lem:mark_at_least_one} guarantees that at least one vertex from this $\mathit{candidates}$ set transitions from an unmarked to a marked state. 
	Given that the total number of vertices $|V|$ is finite and each vertex changes its state at most once, therefore, the algorithm must terminate after at most $|V|$ iterations.
	The upper bound of $\vert{}V\vert{}$ iterations corresponds to the worst-case scenario where the DAG forms a linear chain ($\mathit{TX}_0 \leftarrow \mathit{TX}_1 \leftarrow \dots \leftarrow \mathit{TX}_{\vert{}V\vert{}-1}$).
\end{proof}

\subsection{Conflict‑Free Batch Selection Mechanism}
\label{ConflictFreeBatchSelectionMechanism}

While the Back-Propagation mechanism effectively finds more committable transactions when the DAG is deep, it remains constrained in \emph{Read-Modify-Write (RMW)} dominant scenarios. 
In an RMW operation, a transaction first reads a data item's current state, performs internal business logic, and subsequently writes the updated state back to the same item. 
A common real-world example is found in financial applications:
\begin{itemize}
	\item Deposit/Withdrawal: A transaction reads account $A$'s balance $\mathit{bal}_A$, adds or subtracts an amount $v$, and writes the new balance $(\mathit{bal}_A \pm v)$ back to $A$.
	\item Transfer: A transaction reads the balances of both sender $A$ ($\mathit{bal}_A$) and receiver $B$ ($\mathit{bal}_B$), and writes the updated values $(\mathit{bal}_A - v)$ and $(\mathit{bal}_B + v)$ back to $A$ and $B$, respectively.
\end{itemize}

In Lantern's standard batch processing pattern, if multiple transactions within a single batch target the same hot account (e.g., account $A$) with RMW operations, only a single transaction can be committed. 
As illustrated in Fig.~\ref{rmw}, because each transaction performs both read and write operations on account $A$'s balance, every transaction in the batch develops a RAW dependency on \emph{all} of its preceding transactions. 
This results in a shallow, star-like DAG where only the first transaction $\mathit{TX}_0$ can be safely committed; according to the Conflict Propagation rule (Rule~\ref{rule:BackPropagation}(ii)), all subsequent transactions must be aborted to prevent dirty reads from $\mathit{TX}_0$.

\begin{figure}[t]
	\centering
	\includegraphics[width=0.45\linewidth]{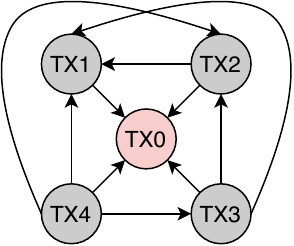}
	\caption{An example of a shallow star-like DAG under hot-account RMW operations, where only $\mathit{TX}_0$ is committable.}
	\label{rmw}
\end{figure}

To resolve this issue, we propose the \emph{Conflict-Free Batch Selection (CFBS)} mechanism specifically for RMW-intensive workloads. 
The core insight of CFBS is to proactively avoid conflicts during the batch selection phase rather than reactively aborting transactions after execution. 
Before execution, a transaction's accessed accounts can be statically extracted from its payload, as the input parameters for smart contract invocation explicitly specify the participating accounts.
By leveraging this extracted account information, CFBS ensures that no two transactions selected within the same batch operate on overlapping accounts.

Algorithm~\ref{alg:cfbs} details the CFBS procedure. Rather than blindly popping a fixed number of $B_{\mathit{size}}$ transactions with the smallest indices, CFBS sequentially scans the pending transaction sequence $\mathcal{P}$ in ascending index order. For each transaction $\mathit{TX_i}$, CFBS extracts its target accounts $\mathit{Accounts}_{\mathit{TX_i}}$ (Line~5). If these accounts do not overlap with the occupied account set $\mathcal{O}$ (Line~6), $\mathit{TX_i}$ is added to the selected set $\mathcal{S}$, and its target accounts are merged into $\mathcal{O}$ (Lines~7--8). Otherwise, $\mathit{TX_i}$ is deferred and appended to the remaining sequence $\mathcal{P}'$ (Line~10). Consequently, the effective batch size becomes dynamic, defined as $B_{\mathit{size}} = |\mathcal{S}|$.

CFBS strictly preserves determinism across the network. Because the targeted accounts extracted from each transaction's payload are deterministic, and all nodes execute Algorithm~\ref{alg:cfbs} over the identical pending sequence $\mathcal{P}$ in the exact same ascending order, every node independently derives the identical selected batch $\mathcal{S}$ for each round.

\begin{algorithm}[t]
	\caption{Conflict-Free Batch Selection}
	\label{alg:cfbs}
	
	\SetAlgoVlined
	
	\SetKwInOut{Input}{Input}
	\SetKwInOut{Output}{Output}
	
	\Input{pending transaction sequence $\mathcal{P}$}
	\Output{selected transactions $\mathcal{S}$; remaining sequence $\mathcal{P}'$}
	
	$\mathcal{S}\leftarrow\emptyset$\;
	$\mathcal{P}'\leftarrow\emptyset$\;
	$\mathcal{O}\leftarrow\emptyset$\tcp*{occupied accounts}
	
	\ForEach{$\mathit{TX}_i\in\mathcal{P}$ in ascending order of $i$}{
		$\mathit{Accounts}_{\mathit{TX}_i}
		\leftarrow\mathrm{ExtractAccounts}(\mathit{TX}_i)$\;
		
		\If{$\mathit{Accounts}_{\mathit{TX}_i}
			\cap\mathcal{O}=\emptyset$}{
			$\mathcal{S}\leftarrow
			\mathcal{S}\cup\{\mathit{TX}_i\}$\;
			$\mathcal{O}\leftarrow
			\mathcal{O}\cup
			\mathit{Accounts}_{\mathit{TX}_i}$\;
		}
		\Else{
			$\mathcal{P}'\leftarrow
			\mathcal{P}'\mathbin{\|}\mathit{TX}_i$\;
		}
	}
	
	\Return $\mathcal{S},\mathcal{P}'$\;
	
\end{algorithm}

\subsection{Proof of Serializability}
\label{SerializabilityandDeterminism}
This section formally proves the serializability of Lantern. Theorem~\ref{thm:intra-batch} first derives the equivalent serializable order within a single batch. Subsequently, Theorem~\ref{thm:inter-batch} extends this equivalence to the block level.

\begin{theorem} [Intra-Batch Serializability]
	\label{thm:intra-batch}
	Within a single batch, the equivalent serializable order of all committed transactions in Lantern is determined by their transaction indices in ascending order.
\end{theorem}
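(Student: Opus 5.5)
We have to show that, for the red set $\mathcal{R}$ produced by Back-Propagation (Algorithm~\ref{alg:back-propagation}) and committed via Rule~\ref{rule:Commit}, the concurrent outcome equals the serial execution of $\mathcal{R}$ in ascending index order, starting from the world state $S_0$ at the beginning of the batch. Following the paper's notion of equivalence, we must match, for every $\mathit{TX}_i \in \mathcal{R}$, the read set (values included) and the write set, and we must match the final state. Throughout, write $\mathcal{R}=\{i_1<i_2<\dots<i_m\}$.

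The first and central step is a structural invariant: \emph{no edge of the DAG joins two red vertices}. Equivalently, if $\mathit{TX}_i \in \mathcal{R}$ and $(i,j)\in E$, then $\mathit{TX}_j$ is gray. The argument goes by the rule that coloured $i$ red.
\begin{itemize}
\item If $i$ was coloured by Initialization, it has out-degree zero, so there is no such edge.
\item If $i$ was coloured red in a gray-layer iteration, then $\mathit{allMarked}$ held for $i$, so every successor $j$ was already marked. The paper's remark after Algorithm~\ref{alg:back-propagation} then shows $j$ cannot be red. Had $j$ been red, it belonged to some earlier red layer, and in the next iteration $i$, being an unmarked reverse neighbour, would have been marked gray. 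This contradicts $i$ becoming red later.
\end{itemize}
I will state this as a short claim, and re-derive it carefully from the layer mechanics, since that mechanics is exactly where correctness hinges. It also needs Theorem~\ref{thm:termination}, which guarantees that every vertex ends up either red or gray.

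Second, I turn the invariant into a statement about reads. Fix $\mathit{TX}_{i_k}$. By Rule~\ref{rule:GraphConstruction}, an edge $(i_k,j)$ exists iff $j<i_k$ and $\mathit{WriteSet}(\mathit{TX}_j)\cap\mathit{ReadSet}(\mathit{TX}_{i_k})\neq\emptyset$. Since no red $j<i_k$ has such an edge, no key in $\mathit{ReadSet}(\mathit{TX}_{i_k})$ is written by any of $\mathit{TX}_{i_1},\dots,\mathit{TX}_{i_{k-1}}$.

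Third, I run an induction on $k$ over the serial ascending execution. The hypothesis is that after serially running $\mathit{TX}_{i_1},\dots,\mathit{TX}_{i_{k-1}}$ from $S_0$, the state agrees with $S_0$ on every key outside $\bigcup_{l<k}\mathit{WriteSet}(\mathit{TX}_{i_l})$. By step two, every key $\mathit{TX}_{i_k}$ reads keeps its value from $S_0$. Its speculative execution ran against $S_0$, and I assume the execution is deterministic given the values read, the usual premise behind the paper's read-write-set-based reasoning. So the serial run of $\mathit{TX}_{i_k}$ follows the same control path and produces identical read and write sets and written values. I will state this determinism premise explicitly.

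Finally, the final state of the serial execution is $S_0$ with each key overwritten by the last writer in ascending order. This is exactly what Rule~\ref{rule:Commit} and Algorithm~\ref{alg:transaction_commit} produce, since they apply write sets serially in ascending order, keeping the highest-index value.

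For the "determined by" part, I recall the paper's argument from Section~\ref{TransactionCommitment}. Placing a higher index before a lower one could introduce an unverified backward RAW dependency. So the ascending order is the one certified by the construction; I do not need to prove that other orders fail.

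I expect the main obstacle to be the first step, namely proving rigorously that red vertices never point to red vertices. The deferred marking within a layer matters here. Because candidates in one gray-layer iteration are marked red simultaneously, two red candidates $u,w$ with $(u,w)\in E$ cannot arise: $u$ would have seen $w$ unmarked, so $\mathit{allMarked}$ would have failed for $u$. I will make this deferral argument explicit.
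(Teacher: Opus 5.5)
Your proposal is correct and takes essentially the same route as the paper's proof. On the read side, every red vertex points only to gray (aborted) vertices, so committed transactions see only the batch's initial state; on the write side, Rule~\ref{rule:Commit}'s highest-index overwrite reproduces ascending serial execution. Your version just makes explicit what the paper leaves implicit: deriving that red-to-gray invariant from the layer mechanics and deferred marking, the induction over the serial run, and the premise that a transaction's behavior is fixed by the values it reads.
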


\begin{proof}
	On the read side, following the Back-Propagation mechanism, each committable transaction either has no dependencies on other transactions or depends exclusively on transactions aborted in the current round. Consequently, no dirty reads can occur among committable transactions, ensuring that all reads they issue originate from the same initial world state.
	On the write side, when multiple transactions conflict on the same key, Lantern employs an overwrite mechanism that preserves only the write from the transaction with the highest index. This behavior is semantically equivalent to the overwrite semantics produced by executing all committable transactions serially in ascending order of their indices.
\end{proof}

\begin{theorem} [Inter-Batch Serializability]
	\label{thm:inter-batch}
	The equivalent serializable order of an entire block in Lantern is obtained by sequentially composing the serializable orders of all batches in ascending order of their batch indices.
\end{theorem}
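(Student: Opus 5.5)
The plan is an induction on the number of batches (rounds) processed within a block. Let the rounds be $r = 1, 2, \dots, m$. Let $S_0$ be the world state at the start of the block, and let $S_r$ be the state after round $r$ commits. Let $C_r$ be the set of transactions committed in round $r$, and let $\pi_r$ be $C_r$ sorted in ascending index order; by Theorem~\ref{thm:intra-batch}, $\pi_r$ is the serializable order of round $r$. The claim to prove is that the concatenation $\pi = \pi_1 \| \pi_2 \| \cdots \| \pi_m$ is a serializable order for the whole block. Concretely, serially executing $\pi$ from $S_0$ should give every transaction the same read/write sets as those recorded by Lantern in the round where it committed, and should end in the state $S_m$.

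The induction hypothesis $H(r)$ says that serial execution of $\pi_1\|\cdots\|\pi_r$ from $S_0$ yields exactly $S_r$, with identical read/write sets for every committed transaction so far. The base case $H(0)$ is trivial. For the step, observe that in round $r+1$ every selected transaction (from the plain selection or from CFBS) is executed speculatively against the invariant state $S_r$. By $H(r)$, $S_r$ is precisely the state a serial executor holds after finishing $\pi_1\|\cdots\|\pi_r$. Applying Theorem~\ref{thm:intra-batch} with initial state $S_r$ then shows that serially running $\pi_{r+1}$ from $S_r$ reproduces the recorded reads. Because execution is deterministic given the state, the writes also match, and Rule~\ref{rule:Commit} yields the same resulting state $S_{r+1}$. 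This establishes $H(r+1)$.

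Two bookkeeping facts complete the argument. First, $\pi$ is a permutation of the block. Each round commits only red vertices, and aborted (gray) transactions are reinserted into the pending queue and never duplicated. Termination requires every round to commit at least one transaction: Rule~\ref{rule:BackPropagation}(i) marks the minimal-index vertex of each batch red, since it has out-degree zero. Hence the pending sequence strictly shrinks and every transaction is eventually committed exactly once. Second, transactions aborted in a round leave no trace. Their buffered writes are discarded, so they affect neither $S_{r+1}$ nor the reads of other transactions, and their later re-execution is against a fresh state.

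The main obstacle I expect is making precise that the speculative execution against $S_r$ is indistinguishable from execution at the corresponding point of the serial schedule. Theorem~\ref{thm:intra-batch} is phrased relative to "the same initial world state," and its read-side argument must be instantiated with $S_r$ rather than the block's initial state. This requires transaction outcomes to be a deterministic function of the read values, which Definition~\ref{def:serializability}'s notion of equivalence (identical read/write sets) supports. I would state this as an explicit assumption before running the induction. Note also that $\pi$ need not follow the original block order, since aborted low-index transactions move into later batches. This is permitted because Definition~\ref{def:serializability} allows any permutation, and determinism of $\pi$ follows from the per-round determinism arguments already given.
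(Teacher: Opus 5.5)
Your proposal is correct and follows essentially the paper's argument: the paper treats each batch as a ``macro-transaction'' that starts only after the previous batch's writes have been applied, so concatenating the per-batch ascending orders is equivalent to a serial execution, and your induction on $H(r)$ (applying Theorem~\ref{thm:intra-batch} at the state $S_r$) is a rigorous version of that step. Your extra bookkeeping is also correct and makes explicit a point the paper's proof leaves implicit: the minimum-index vertex of every batch has out-degree zero and is therefore red, so each round commits something and $\pi$ is a genuine permutation of the block.
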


\begin{proof}
	Under the Lantern framework, execution across batches enforces strict isolation. Specifically, a subsequent batch $B_{i+1}$ starts only after all transactions in $B_i$ complete and their write sets are applied to the world state.
	Viewing each batch as a ``macro-transaction'', the state transitions between batches are logically equivalent to a serial execution of these macro-transactions. Thus, concatenating the internal serializable orders of all batches in ascending order of their batch indices yields the equivalent serializable order for the entire block.
\end{proof}

\begin{example}
	\label{ex:constructive_order}
	Consider a block execution with a batch size $B_{\mathit{size}} = 5$:
	\begin{itemize}
		\item Batch 1: Lantern processes the initial batch $\mathcal{B}_1 = \{\mathit{TX}_0, \mathit{TX}_1, \mathit{TX}_2, \mathit{TX}_3, \mathit{TX}_4\}$ and commits the subset $\mathcal{S}_1 = \langle \mathit{TX}_0, \mathit{TX}_2, \mathit{TX}_3 \rangle$. The aborted transactions $\{\mathit{TX}_1, \mathit{TX}_4\}$ are reinserted at the head of the pending queue.
		\item Batch 2: Lantern fetches the next batch $\mathcal{B}_2 = \{\mathit{TX}_1, \mathit{TX}_4, \mathit{TX}_5, \mathit{TX}_6, \mathit{TX}_7\}$ and commits the subset $\mathcal{S}_2 = \langle \mathit{TX}_1, \mathit{TX}_4, \mathit{TX}_6 \rangle$.
	\end{itemize}
	By sequentially concatenating the commit sequences across all batches, the equivalent serializable order $\pi(T)$ for the block is constructed as:
	\begin{equation*}
		\pi(T) = \mathcal{S}_1 \mathbin{\Vert} \mathcal{S}_2 \mathbin{\Vert} \dots = \langle \underbrace{\mathit{TX}_0, \mathit{TX}_2, \mathit{TX}_3}_{\text{Batch 1 ($\mathcal{S}_1$)}}, \underbrace{\mathit{TX}_1, \mathit{TX}_4, \mathit{TX}_6}_{\text{Batch 2 ($\mathcal{S}_2$)}}, \dots \rangle
	\end{equation*}
\end{example}

Crucially, our serializability proof is constructive, enabling the explicit determination of the equivalent serializable order $\pi(T)$ upon the completion of a block execution. This explicit construct of $\pi(T)$ allows us to empirically validate both the serializability and determinism of Lantern, as detailed in our evaluation (Section~\ref{CorrectnessValidation}). 
Conversely, Aria's serializability proof is purely existential. While it proves that an equivalent serializable schedule exists, it cannot explicitly construct the corresponding serializable order.
This non-constructive nature prevents Aria from experimentally validating its serializability.

\section{Evaluation}

\subsection{Experimental Setup}
\textbf{Hardware Platform.}
All experiments are performed on a dedicated Linux server equipped with dual Intel(R) Xeon(R) Gold 5218 processors, offering a total of 32 physical CPU cores across two NUMA nodes (16 cores and 128 GB DDR4 local memory per node). 
The server runs CentOS Linux 7 (kernel version 3.10.0). 
Simultaneous Multithreading (SMT) is disabled in the BIOS to eliminate hardware thread contention.

\textbf{Software Platform.}
All evaluations are integrated into ChainMaker v2.3.8 \cite{chainmaker_go_v238}.
Smart contracts are authored in Rust and executed on the Wasmer virtual machine.
To ensure fairness, Lantern and all baseline protocols are compiled and executed using Go 1.24 with default compiler optimization settings.
The underlying system adopts TBFT consensus \cite{tendermint}, configured with a maximum block capacity of 1,000 transactions and a transaction pool limit of 50,000 transactions.
To fully saturate the execution engine, the ChainMaker Go SDK client \cite{chainmaker_sdk_go} is deployed on a separate dedicated server, spawning concurrent goroutines to submit workload transactions via gRPC.

\textbf{Baselines.}
We evaluate Lantern against two representative categories of concurrency control baselines:
\begin{itemize}[leftmargin=*]
	\item \emph{ChainMaker:} ChainMaker's default nondeterministic protocol based on Optimistic Concurrency Control (OCC) \cite{OCC}. It operates under the two-stage execution architecture (Fig.~\ref{towstageexecution}), where total execution-layer latency is the combined duration of the primary execution stage and the replica replay stage.
	\item \emph{Aria:} A state-of-the-art deterministic protocol operating without a priori knowledge \cite{aria}. Since original work has demonstrated Aria's superiority over classical deterministic protocols (such as Bohm \cite{BOHM}, PWV \cite{PWV}, Calvin \cite{Calvin}, and PB), we select it as our primary baseline and fully implement all optimization techniques proposed in its original paper.
\end{itemize}
Unless otherwise specified, the batch size $B_{\mathit{size}}$ for Lantern is set at $5\times$ the number of physical CPU cores.

\textbf{Workloads.}
We evaluate execution-layer throughput using two standard benchmarks:
\begin{itemize}[leftmargin=*]
	\item \emph{YCSB \cite{YCSB}:}
	A micro-benchmark for fine-grained state access.
	Each transaction invokes a smart contract executing 10 operations (5 reads and 5 writes) over a key space of one million records. Within a single transaction, all read keys (and similarly, write keys) are unique. Intersections between a transaction's read set and write set are allowed.
	\item \emph{SmallBank \cite{SMALLBANK}:} 
	A macro-benchmark with heavy RMW operations.
	Our implementation includes three tables (\texttt{Account}, \texttt{Saving}, and \texttt{Checking}) initialized with 100,000 accounts. Each account starts with 100,000 tokens in both its saving and checking balances. We evaluate the five state-modifying transaction types: \texttt{send\_payment} (40\%), \texttt{deposit\_checking} (15\%), \texttt{transact\_saving} (15\%), \texttt{amalgamate} (15\%), and \texttt{write\_check} (15\%). Off-chain read-only queries (\texttt{balance}) and state initialization transactions (\texttt{create\_account}) are excluded from throughput measurements.
\end{itemize}

\textbf{Measurement Methodology.}
All evaluations strictly focus on execution-layer throughput.
To measure steady-state performance, boundary blocks are filtered out: initial warm-up/initialization blocks (the first 5 blocks in YCSB and account creation blocks in SmallBank) as well as the trailing 50 draining blocks are omitted.
Consequently, all reported metrics are calculated exclusively from the intermediate stable execution phase, with each metric averaged over \emph{at least 100 valid blocks}.

\subsection{Correctness Validation}
\label{CorrectnessValidation}
Using the serializable order $\pi(T)$ derived in Section~\ref{SerializabilityandDeterminism}, we validate Lantern's correctness over 10,000 continuous blocks under high contention.
Our validation comprises two phases:
First, to verify determinism, each block is executed via Lantern in multiple independent runs; we observe that all runs consistently yield an identical serializable order.
Second, to verify serializability, we sequentially re-execute the transactions according to the generated order $\pi(T)$. 
The evaluation results show that the read/write sets of each transaction perfectly match those from the concurrent execution, and the final state remains identical between the serial and concurrent executions. 
These empirical results demonstrate both the determinism and serializability of Lantern.

\subsection{Overall Performance}
On YCSB, we disable the CFBS mechanism as the workload does not feature intensive RMW operations. We evaluate Lantern alongside its variant Lantern-NoBP (without Back-Propagation), as well as three baseline systems: ChainMaker, Aria, and Serial execution. 
On SmallBank, an RMW-intensive benchmark where the complete Lantern enables both BP and CFBS mechanisms by default, we conduct a comprehensive ablation study using four variants: full Lantern, Lantern-NoBP (with CFBS only), Lantern-NoCFBS (with BP only), and Lantern-NoBPandCFBS (disabling both mechanisms).
For both benchmarks, we systematically vary the Zipfian \textit{skew} parameter to simulate different levels of data contention.

\begin{figure}[t]
	\centering
	\includegraphics[width=0.92\linewidth]{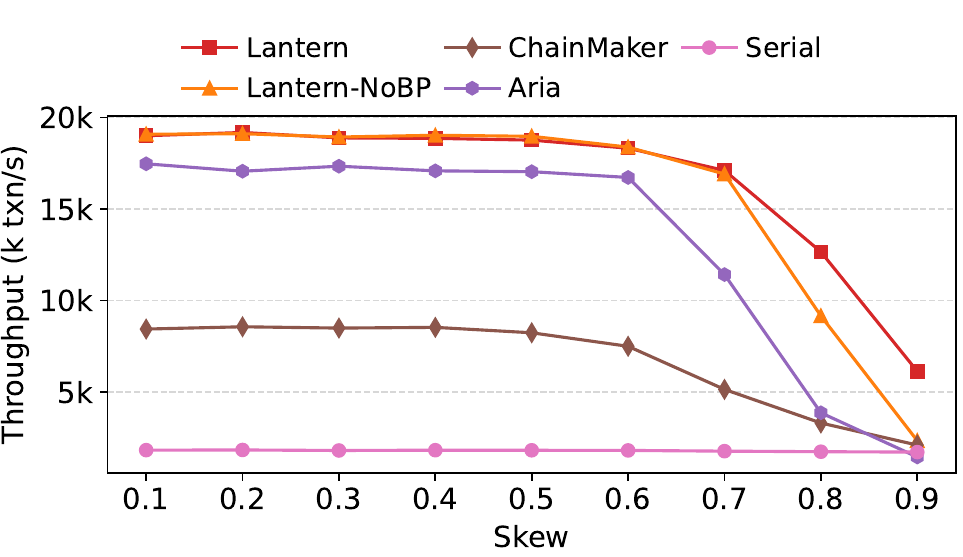}
	\caption{Overall Throughput on YCSB}
	\label{OverallThroughputonYCSB}
\end{figure}

As illustrated in Fig.~\ref{OverallThroughputonYCSB}, Back-Propagation (BP) yields prominent performance gains under severe data contention. Specifically, when the Zipfian skew reaches $0.9$, Lantern ($6,131$~txn/s) outperforms Lantern-NoBP ($2,339$~txn/s) by $2.6\times$.
Compared to existing baseline systems, Lantern maintains a substantial throughput lead:
\begin{itemize}[leftmargin=*]
	\item \textbf{VS. Aria:} Under extreme contention ($\textit{skew} = 0.9$), Lantern outperforms Aria ($1,449$~txn/s) by $4.2\times$. 
	This performance gap originates from Aria's deterministic execution model, which permits at most one transaction to commit write operations per key within a single batch, thereby disallowing concurrent write overwrites. 
	Under write-intensive workloads, this strict reservation policy forces an excessive number of conflicting transactions to abort.
	\item \textbf{VS. ChainMaker:} Lantern consistently outperforms ChainMaker by at least $2.2\times$ across all contention levels. 
	This advantage stems from the fact that ChainMaker's non-deterministic protocol requires an additional replay stage to ensure cross-node consistency, thereby bottlenecking its overall throughput.
\end{itemize}

On {SmallBank} (Fig.~\ref{OverallThroughputonSmallBank}), the throughput curves of Lantern and Lantern-NoBP nearly overlap, as do those of Lantern-NoCFBS and Lantern-NoBPandCFBS. This observation indicates that Back-Propagation yields only marginal improvements under RMW-intensive scenarios.
Conversely, the substantial performance gap between variants with CFBS (Lantern and Lantern-NoBP) and those without (Lantern-NoCFBS and Lantern-NoBPandCFBS) demonstrates the crucial role of the CFBS mechanism in handling RMW workloads. 
Specifically, CFBS boosts throughput by up to $1.84\times$ under high contention ($\textit{skew} = 0.9$, elevating throughput from $8,035$~txn/s for Lantern-NoCFBS to $14,775$~txn/s for Lantern).
Furthermore, compared to the fully unoptimized baseline (Lantern-NoBPandCFBS), Lantern's combined optimization achieves a $2.18\times$ speedup.

\begin{figure}[t]
	\centering
	\includegraphics[width=\linewidth]{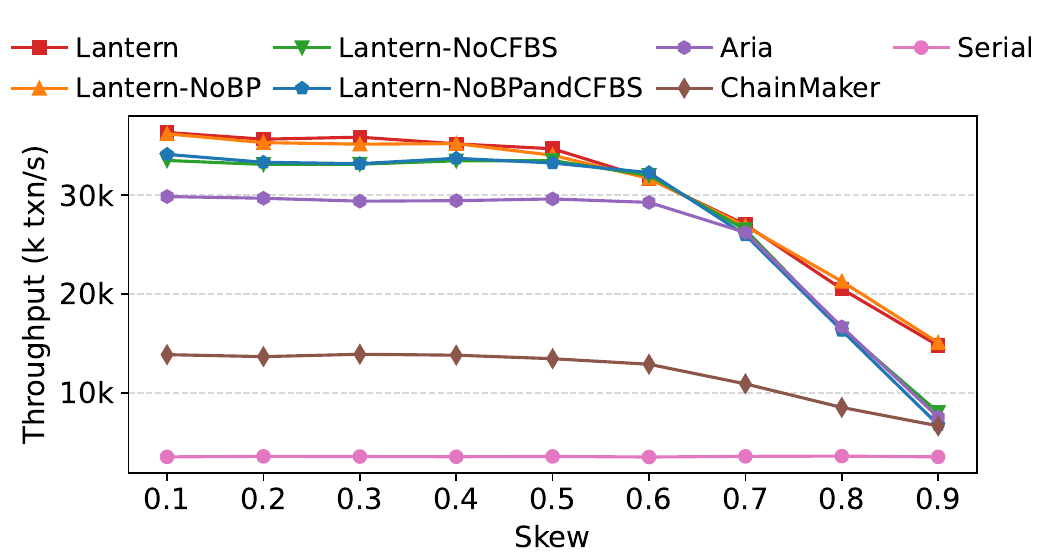}
	\caption{Overall Throughput on SmallBank}
	\label{OverallThroughputonSmallBank}
\end{figure}

\subsection{Scalability}

To eliminate potential performance anomalies caused by non-uniform memory access (NUMA) in our dual-socket Intel Xeon architecture, we enforce a balanced resource allocation strategy for scalability evaluations across 4, 8, 16, and 32 CPU cores. 
Specifically, we evenly bind CPU cores across the two NUMA sockets using \texttt{taskset}, and enable memory page interleaving via \texttt{numactl --interleave=all}.
This setup ensures symmetric CPU scaling and uniform memory access latency across all core configurations.

Fig.~\ref{fig:ScalabilityonYCSB} illustrates the scalability results under the YCSB benchmark across different CPU core counts. 
Under a uniform distribution (Fig.~\ref{fig:YCSBUniform}), data conflicts are extremely rare. Consequently, the Back-Propagation mechanism yields marginal performance gains, causing the throughput curves of Lantern and Lantern-NoBP to virtually overlap. 
Furthermore, all concurrent control protocols exhibit near-linear scalability as the core count increases from 4 to 32. 
In contrast, under severe contention (Fig.~\ref{fig:YCSBZipfian}), Lantern consistently outperforms Lantern-NoBP across all core configurations. 
Moreover, while the throughput of Lantern‑NoBP, Aria, and ChainMaker plateaus or even degrades with more CPU cores, Lantern demonstrates superior scalability.

\begin{figure}[t] 
	\centering
	
	\begin{subfigure}[b]{0.49\columnwidth} 
		\centering
		\includegraphics[width=\linewidth]{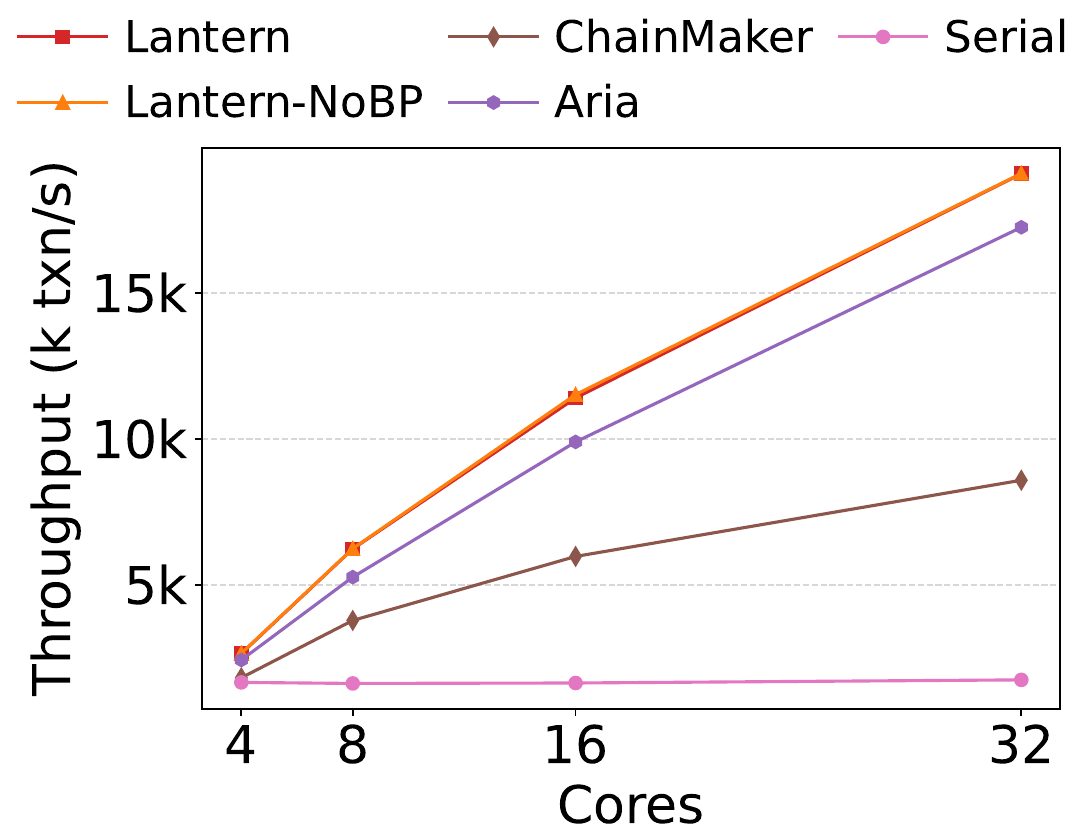}
		\caption{Uniform}
		\label{fig:YCSBUniform}
	\end{subfigure}
	\hfill
	\begin{subfigure}[b]{0.49\columnwidth}
		\centering
		\includegraphics[width=\linewidth]{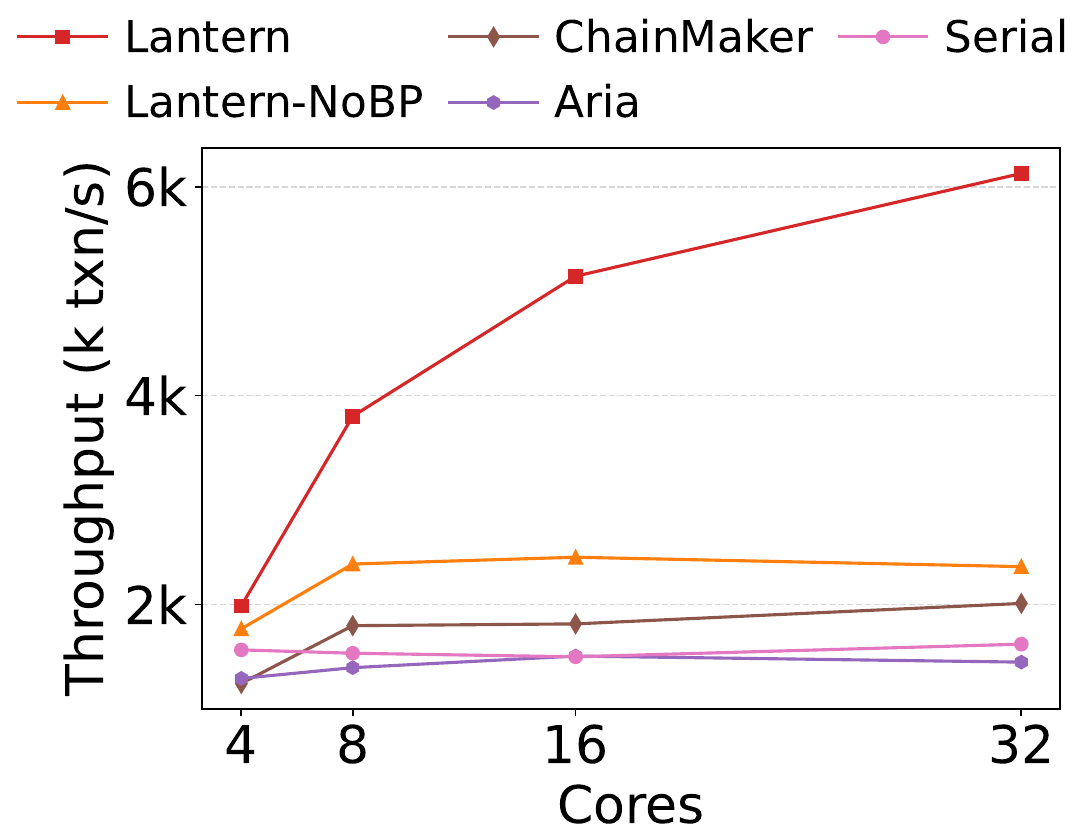}
		\caption{Zipfian (Skew = 0.9)}
		\label{fig:YCSBZipfian}
	\end{subfigure}
	
	\caption{Scalability on YCSB}
	\label{fig:ScalabilityonYCSB}
\end{figure}

\begin{figure}[t] 
	\centering
	
	\begin{subfigure}[b]{0.49\columnwidth} 
		\centering
		\includegraphics[width=\linewidth]{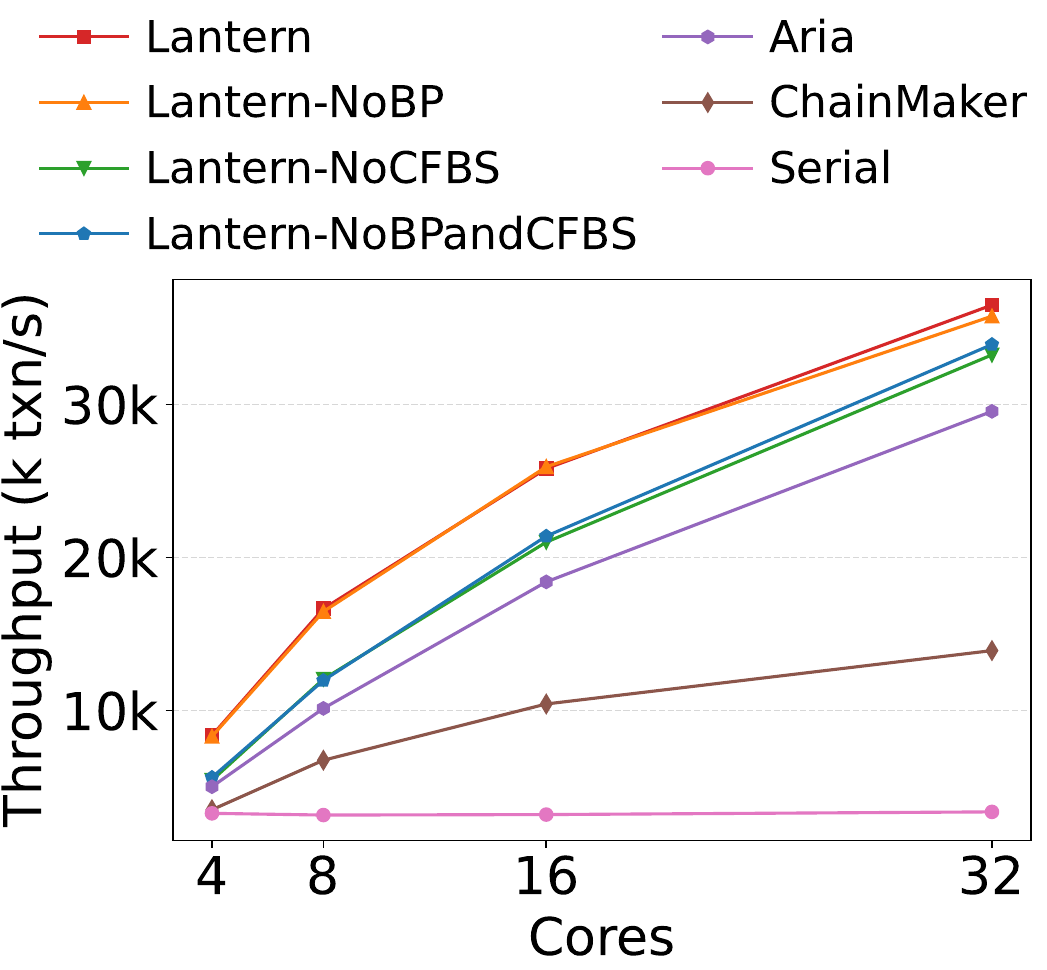}
		\caption{Uniform}
		\label{fig:SmallBankUniform}
	\end{subfigure}
	\hfill
	\begin{subfigure}[b]{0.49\columnwidth}
		\centering
		\includegraphics[width=\linewidth]{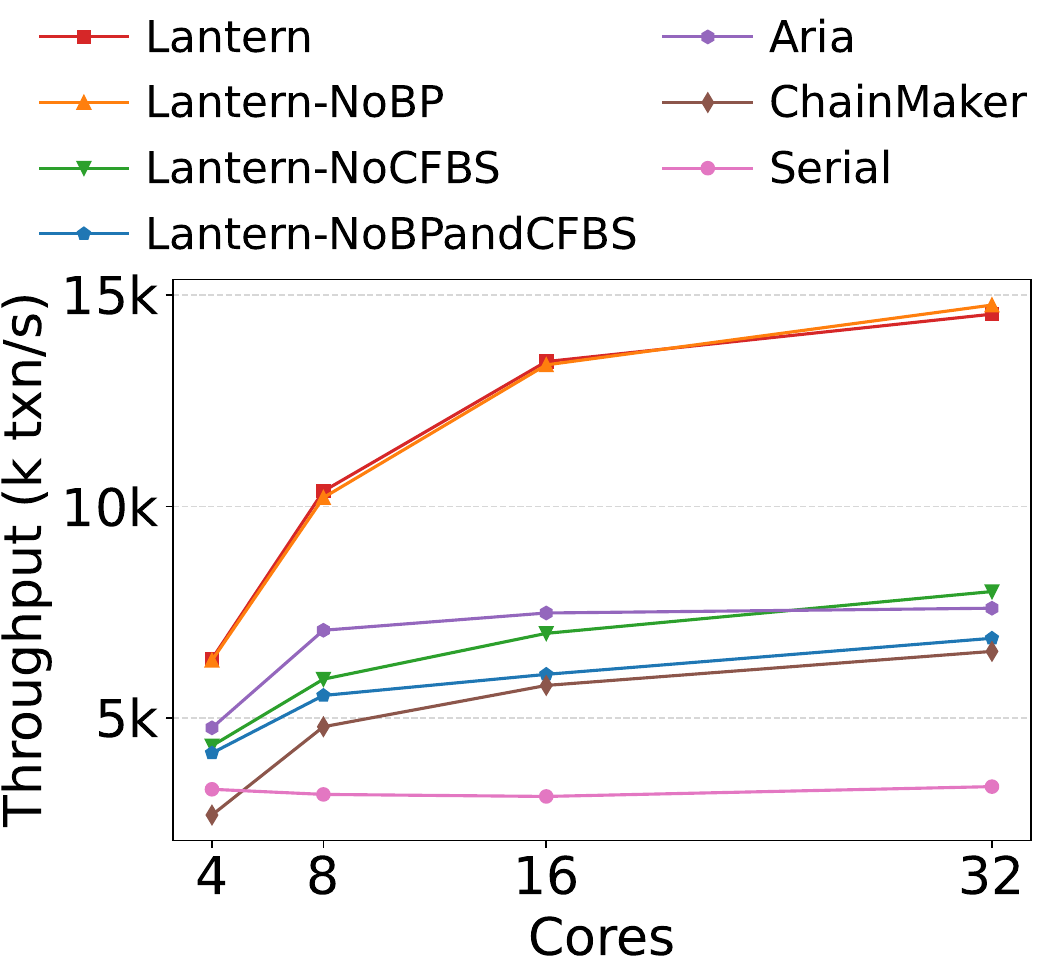}
		\caption{Zipfian (Skew = 0.9)}
		\label{fig:SmallBankZipfian}
	\end{subfigure}
	
	\caption{Scalability on SmallBank}
	\label{fig:ScalabilityonSmallBank}
\end{figure}

Fig.~\ref{fig:ScalabilityonSmallBank} illustrates the scalability results under the SmallBank benchmark. 
When workload access is uniform (Fig.~\ref{fig:SmallBankUniform}), protocols integrated with the CFBS mechanism (Lantern and Lantern-NoBP) consistently outperform those without CFBS (Lantern-NoCFBS and Lantern-NoBPandCFBS).
Under severe contention (Fig.~\ref{fig:SmallBankZipfian}), Lantern and Lantern-NoBP outperform all other baselines by a wide margin.
When CFBS is enabled, the performance curves of Lantern and Lantern-NoBP virtually overlap, indicating that Back-Propagation provides marginal benefit when conflicts are proactively eliminated during batch selection. 
However, when CFBS is disabled, Back-Propagation takes effect by salvaging valid transactions from the dependency graph; specifically, Lantern-NoCFBS achieves $7,990$~txn/s at 32 cores, outperforming Lantern-NoBPandCFBS ($6,886$~txn/s) by $1.2\times$.

\subsection{Latency Breakdown}
\begin{figure}[t] 
	\centering
	
	\begin{subfigure}[b]{0.49\columnwidth} 
		\centering
		\includegraphics[width=\linewidth]{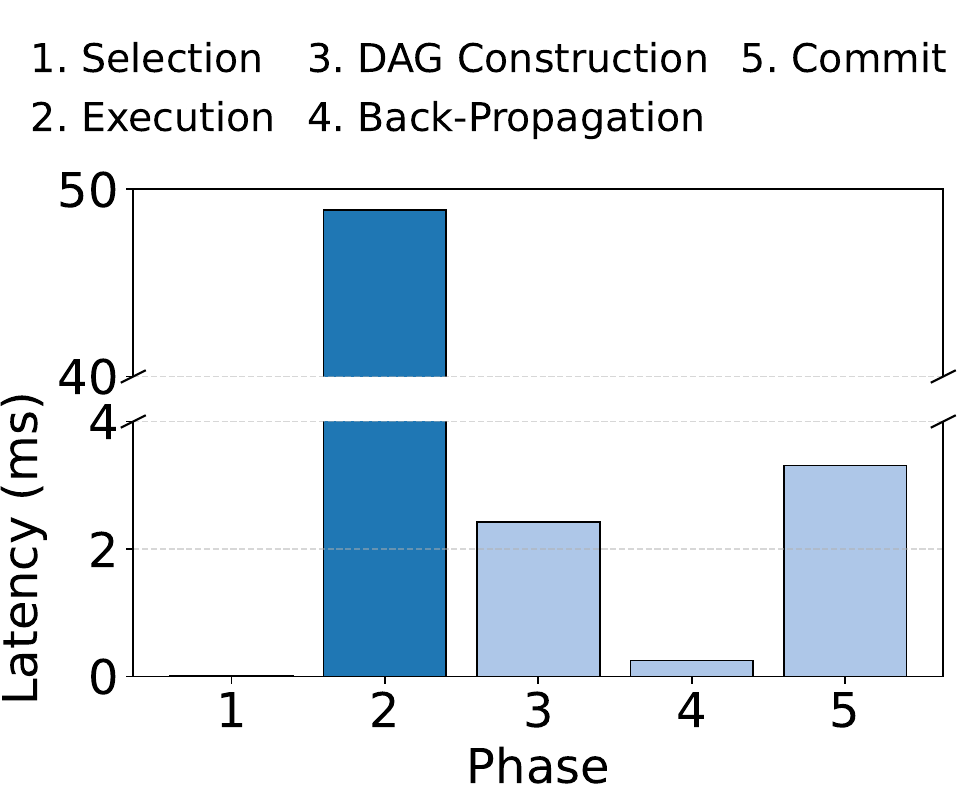}
		\caption{Low Contention (Skew = 0.1)}
		\label{fig:HighContention(Skew=0.1)}
	\end{subfigure}
	\hfill
	\begin{subfigure}[b]{0.49\columnwidth}
		\centering
		\includegraphics[width=\linewidth]{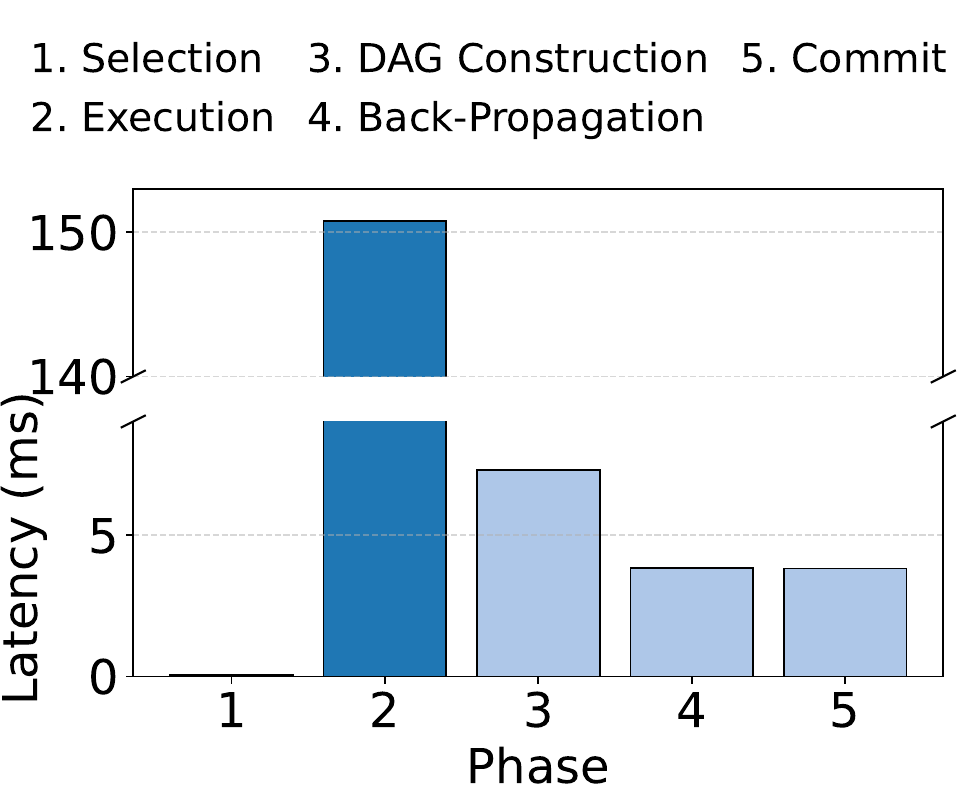}
		\caption{High Contention (Skew = 0.9)}
		\label{fig:HighContention(Skew=0.9)}
	\end{subfigure}
	
	\caption{Phase-wise Latency Breakdown On YCSB}
	\label{fig:Phase-wiseLatencyBreakdown}
\end{figure}

\begin{figure}[t] 
	\centering
	
	\begin{subfigure}[b]{0.49\columnwidth} 
		\centering
		\includegraphics[width=\linewidth]{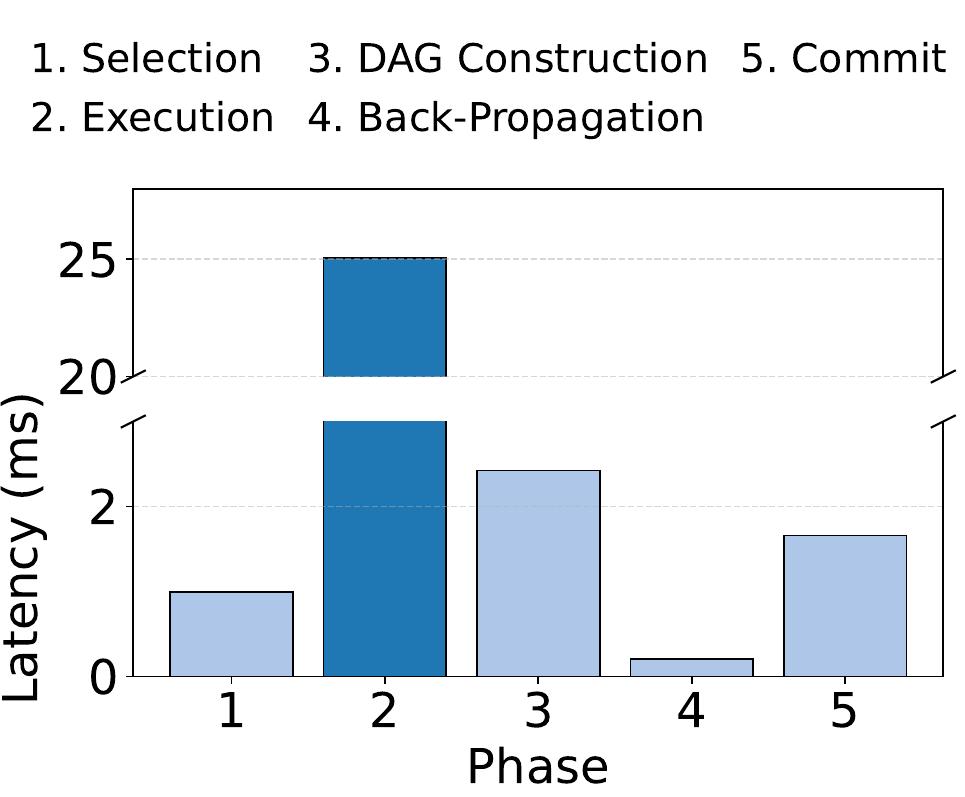}
		\caption{Low Contention (Skew = 0.1)}
		\label{Skew=0.1)}
	\end{subfigure}
	\hfill
	\begin{subfigure}[b]{0.49\columnwidth}
		\centering
		\includegraphics[width=\linewidth]{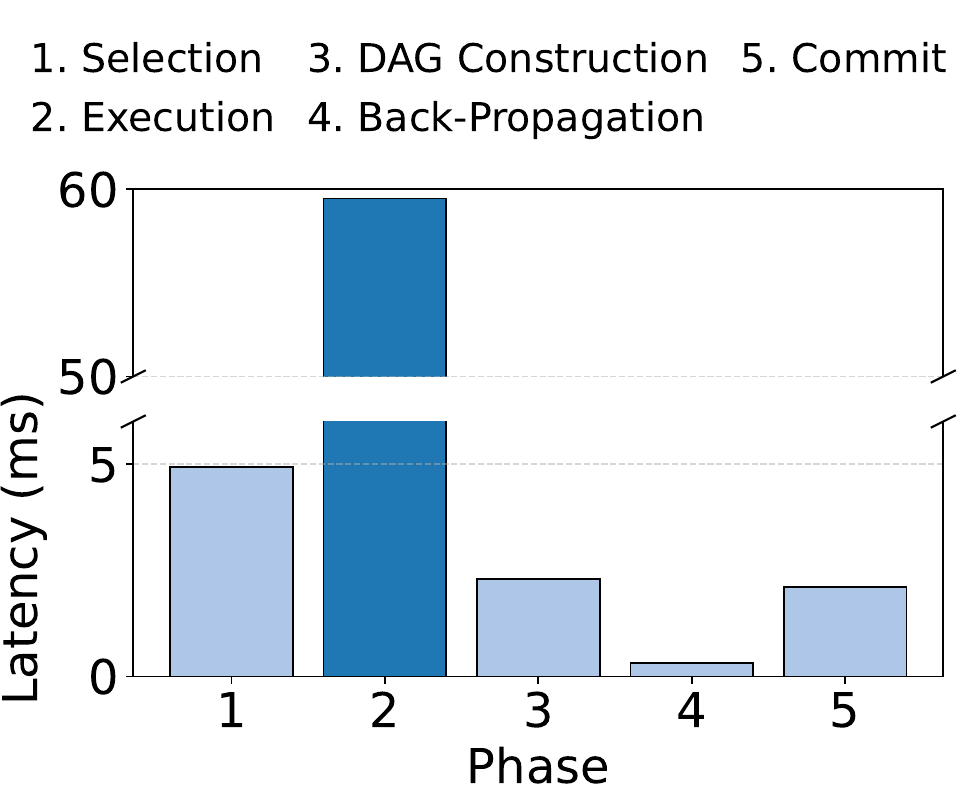}
		\caption{High Contention (Skew = 0.9)}
		\label{Skew=0.9)}
	\end{subfigure}
	
	\caption{Phase-wise Latency Breakdown On SmallBank}
	\label{Phase-wiseLatencyBreakdownOnSmallBank}
\end{figure}

In this section, we evaluate the phase-wise latency breakdown of Lantern under both low-contention ($\text{Zipfian skew} = 0.1$) and high-contention ($\text{Zipfian skew} = 0.9$) workloads. 
Fig.~\ref{fig:Phase-wiseLatencyBreakdown} illustrates the detailed latency distribution on YCSB. 
Under low contention (Fig.~\ref{fig:HighContention(Skew=0.1)}), transaction execution dominates the block latency, with the Execution phase consuming $48.9$~ms, followed by the Commit phase at $3.3$~ms. 
The remaining scheduling operations—including Selection, DAG Construction, and Back-Propagation—account for only $4.9\%$ of the total latency, demonstrating Lantern’s minimal scheduling overhead.
Under high contention (Fig.~\ref{fig:HighContention(Skew=0.9)}), severe conflicts force aborted transactions into subsequent rounds for re-execution, increasing the Execution phase latency to $150.8$~ms.
On SmallBank (Fig.~\ref{Phase-wiseLatencyBreakdownOnSmallBank}), where the CFBS mechanism is enabled to proactively prevent hot-account conflicts, the Selection phase incurs $1.0$~ms and $4.9$~ms under skew $0.1$ and $0.9$, respectively.

\subsection{Impact of Batch Size $B_{\mathit{size}}$}
\label{Impact of Batch Size}
In this section, we evaluate the impact of the batch size $B_{\mathit{size}}$ on throughput. 
We compare five configurations where $B_{\mathit{size}}$ is fixed to $1\times$, $3\times$, $5\times$, $7\times$, and $9\times$ the CPU physical core count (denoted as Lantern-1 through Lantern-9). 
Note that on the SmallBank benchmark, the CFBS mechanism is disabled to prevent $B_{\mathit{size}}$ from becoming dynamic during processing.
Fig.~\ref{fig:ImpactofBatchSize} presents the results, revealing a clear performance trade-off across both workloads:
\begin{itemize}[leftmargin=*]
	\item \emph{Low-to-Moderate Contention ($\text{skew} \le 0.7$):} Larger batch sizes consistently achieve higher throughput. 
	Increasing $B_{\mathit{size}}$ from $1\times$ to $5\times$ yields substantial performance gains because a larger batch exposes greater intra-batch parallelism and amortizes scheduling overheads. 
	However, further increasing $B_{\mathit{size}}$ to $7\times$ and $9\times$ yields diminishing marginal returns as multi-core hardware execution capacity saturates.
	
	\item \emph{Severe Contention ($\text{skew} = 0.9$):} The throughput trend completely reverses. 
	Lantern-1 achieves the highest throughput, whereas throughput monotonically degrades as $B_{\mathit{size}}$ increases. 
	This degradation occurs because a large batch under high contention leads to low transaction commit ratios, and severe computational resource wastage from frequent aborts. 
	In contrast, a smaller batch size bounds the contention domain per execution round, maintaining a higher commit ratio and ensuring stable throughput.
\end{itemize}

Based on these findings, Lantern adopts $B_{\mathit{size}} = 5 \times \text{CPU cores}$ as its default configuration, balancing parallel throughput under low contention with robust handling of data conflicts.

\begin{figure}[t] 
	\centering
	
	\begin{subfigure}[b]{0.49\columnwidth} 
		\centering
		\includegraphics[width=\linewidth]{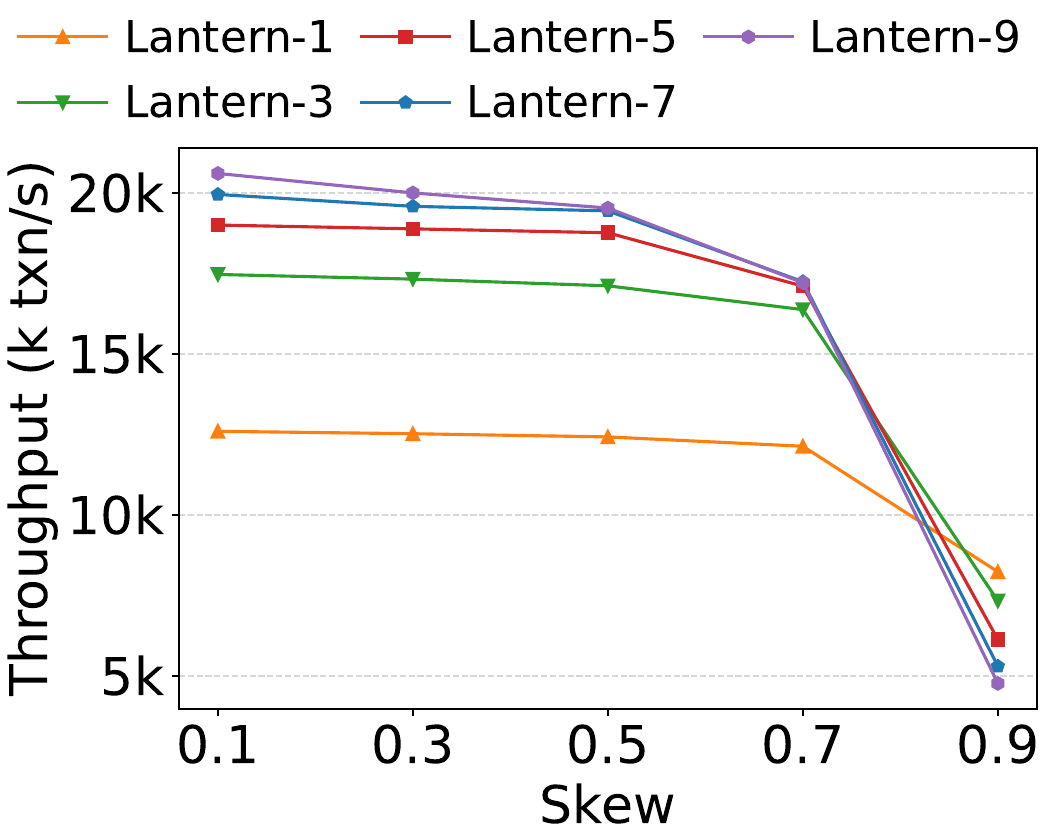}
		\caption{YCSB}
		\label{fig:YCSB}
	\end{subfigure}
	\hfill
	\begin{subfigure}[b]{0.49\columnwidth}
		\centering
		\includegraphics[width=\linewidth]{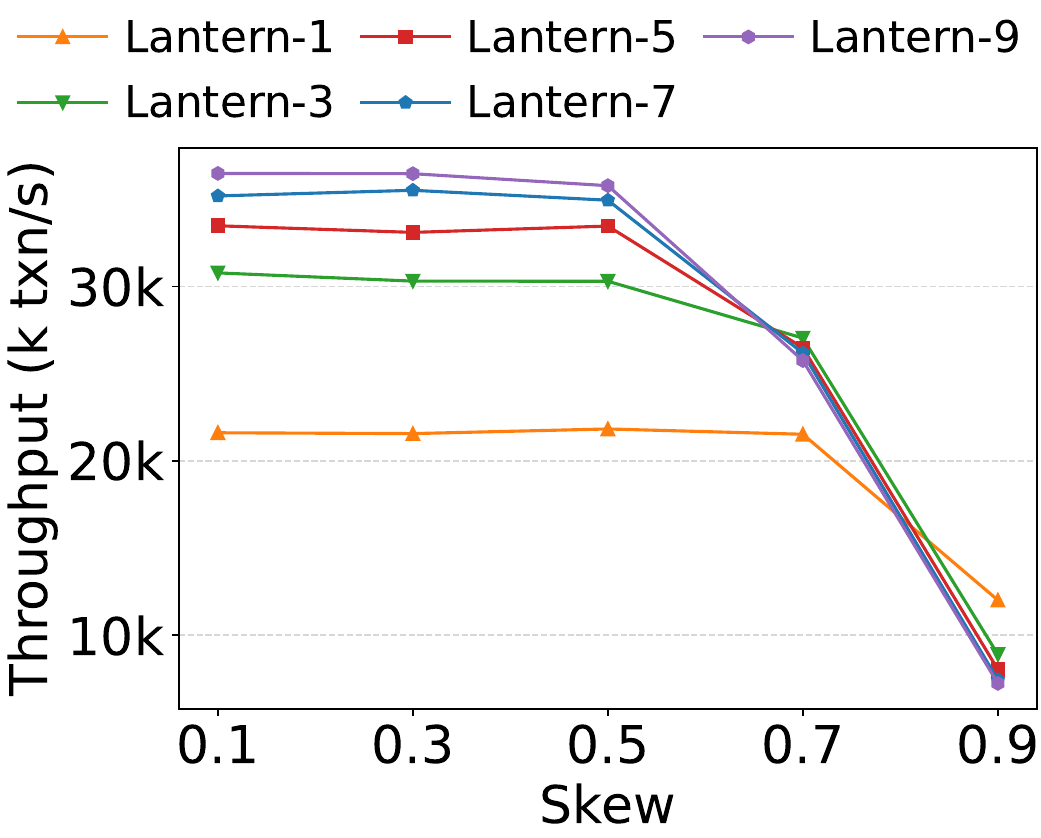}
		\caption{SmallBank}
		\label{fig:SmallBank}
	\end{subfigure}
	
	\caption{Impact of Batch Size $B_{\mathit{size}}$}
	\label{fig:ImpactofBatchSize}
\end{figure}

\section{Conclusion}
In this paper, we presented Lantern, a deterministic concurrency control protocol designed for high-performance transaction processing systems operating without prior read-write knowledge. 
By combining an overwrite-permissive strategy on DAGs with a deterministic Back-Propagation mechanism, Lantern significantly expands the set of committable transactions per batch while maintaining cross-node state consistency. Additionally, its Conflict-Free Batch Selection (CFBS) mechanism effectively eliminates contention under RMW-intensive scenarios. 
We integrated Lantern into ChainMaker and evaluated it on YCSB and SmallBank benchmarks. The results demonstrate that Lantern achieves up to a $4.2\times$ throughput speedup over Aria and improves ChainMaker's execution-layer throughput by at least $2.2\times$.

\bibliographystyle{IEEEtran}
\bibliography{references}
\end{document}